\def\BibTeX{{\rm B\kern-.05em{\sc i\kern-.025em b}\kern-.08em
    T\kern-.1667em\lower.7ex\hbox{E}\kern-.125emX}}
\newtheorem{secthm}{Theorem}[section]
\newtheorem{seccor}[secthm]{Corollary}
\newtheorem{seclem}[secthm]{Lemma}
\newtheorem{secex}[secthm]{Example}
\newtheorem{secprob}[secthm]{Problem}
\newtheorem{secdefn}[secthm]{Definition}
\newtheorem{secrem}[secthm]{Remark}
\newtheorem{secasm}[secthm]{Assumption}
\newcommand{\bE} { {\mathbb E}}
\newcommand{\bP} { {\mathbb P}}
\newcommand{\bR} { {\mathbb R}}
\newcommand{\bZ} { {\mathbb Z}}
\newcommand{\cA} { {\mathcal A}}
\newcommand{\cF} { {\mathcal F}}
\newcommand{\cQ} { {\mathcal Q}}
\newcommand{\cS} { {\mathcal S}}
\def\red{\hfill $\lhd$}
\begin{document}

\title{\LARGE \bf
Design of Stochastic Quantizers for Privacy Preservation
}

\author{
    Le Liu, Yu Kawano, Ming Cao
    \\
    \thanks{The work of Liu and Cao was supported in part by the European Research Council (ERC-CoG-771687). The work of Yu Kawano was supported in part by JSPS KAKENHI Grant Number JP22KK0155.}
\thanks{Le Liu and Ming Cao are with the Faculty of Science and Engineering, Univerisity of Groningen, 9747 AG Groningen, The Netherlands {\tt \small \{le.liu, m.cao\}@rug.nl}}
    \thanks{Yu Kawano is with the Graduate School of Advance Science and Engineering, Hiroshima Univeristy, Higashi-hiroshima 739-8527, Japan 
        {\tt\small ykawano@hiroshima-u.ac.jp}}
}

\maketitle

\begin{abstract}
    In this paper, we examine the role of stochastic quantizers for privacy preservation. We first employ a static stochastic quantizer and investigate its corresponding privacy-preserving properties. Specifically, we demonstrate that a sufficiently large quantization step guarantees $(0, \delta)$ differential privacy. Additionally, the degradation of control performance caused by quantization is evaluated as the tracking error of output regulation. These two analyses characterize the trade-off between privacy and control performance, determined by the quantization step. This insight enables us to use quantization intentionally as a means to achieve the seemingly conflicting two goals of maintaining control performance and preserving privacy at the same time; towards this end, we further investigate a dynamic stochastic quantizer. Under a stability assumption, the dynamic stochastic quantizer can enhance privacy, more than the static one, while achieving the same control performance. We further handle the unstable case by additionally applying input Gaussian noise.
\end{abstract}

\begin{IEEEkeywords}
    discrete-time systems, linear systems, privacy, stochastic quantization
\end{IEEEkeywords}

\section{Introduction}
Data sharing is a fundamental feature of the Internet of Things, which greatly enhances the efficiency of modern society. However, concerns are growing on how private information is not sufficiently guarded against malicious users \cite{wang2016defending}. For control engineers, privacy-preserving control, especially for control implemented in networks, has attracted growing attention \cite{tourani2017security}. In networked control systems, a local system often communicates with others and the communication leads to risks of privacy leakage\cite{Jorge2016, mo2016privacy}.  To address these concerns, various mechanisms have been proposed to protect users' privacy in networked control systems \cite{hawkins2022differentially, yazdani2022differentially,zhang2021privacy}. Note inherent to digital communication networks,  the signals transmitted are quantized, sometimes even coarsely quantized due to communication capacity, resulting in a control performance degradation \cite{Fagnani2004,Liang2019, Bikas2020}. Also note that quantization itself introduces noise into the system, which can improve privacy in the sense that the true value differs from what is transmitted. \cite{murguia2018privacy, Yu2021quantization}. This motivates us to study privacy, control performances and corresponding trade-offs under quantization in networked control systems.
 In this paper, our objective is to investigate how  quantizers preserve privacy while maintaining an acceptable control performance.


\smallskip

\subsubsection*{Literature review}
In systems and control, deterministic quantizers are typically implemented in different forms, such as uniform quantizers \cite{bu2020model,li2020distributed}, logarithmic quantizers \cite{li2020distributed, Fu2015}, and dynamic quantizers \cite{Brockett_quantizer, liberzon2003hybrid, Liberzon2006}. However, privacy criteria such as differential privacy~\cite{dwork2006differential, Asoodeh2021, zhao2022survey} have been primarily introduced in stochastic settings. Partly due to this difference in problem setting, there are few works focusing on privacy analysis under quantization, and we have made some earlier attempts to address this gap \cite{Yu2021quantization}\cite{liu2023privacy}. The paper \cite{wang2022quantization} studies privacy-preserving distributed optimization when a quantizer is designed to generate only ternary data randomly.


\smallskip

\subsubsection*{Contribution}
In this paper, we investigate the relationship between stochastic quantizers and their performance of privacy and control in discrete-time linear time-invariant systems. For privacy performances, we deal with the initial state privacy as in~\cite{Yu2021quantization, Duan2015, nozari2017differentially, altafini2019dynamical, wang2019privacy, Yu2020, wang2023differential}. As a control objective, we consider tracking control. We start our analysis from a static stochastic quantizer and show that a sufficiently large quantization step guarantees $(0,\delta)$ differential privacy for a fixed finite time instant. If a system matrix is Schur stable, we can further derive a sufficient condition for differential privacy in an infinite time. Additionally, we estimate an upper bound on tracking error caused by the quantization. The combination of differential privacy analysis and the error estimation provides a trade-off between privacy and control performance, determined by a quantization step.

In order to improve the trade-off, we then employ a dynamic stochastic quantizer and demonstrate the improvement when the system matrix is Schur stable. In particular, the dynamic stochastic quantizer can make a mechanism more private than the static one while maintaining the same control performance by selecting the initial and terminal quantization steps appropriately. Finally, we also address the case where a system matrix is unstable by additionally applying input Gaussian noise and again show that a dynamic stochastic quantizer gives a better trade-off performance than a static stochastic one.

The main contributions are summarized as follows.
\begin{enumerate}
    \item We provide a differential privacy condition for a static stochastic quantizer, demonstrating a larger quantization step makes a mechanism more private.
    \item We estimate the tracking error caused by quantization for a static stochastic quantizer. A smaller quantization step results in a smaller error. Thus, a trade-off between privacy and control performances is characterized by a quantization step.
    \item When a system matrix is Schur stable, we improve the trade-off by utilizing a dynamic stochastic quantizer. This enables us to preserve the privacy while maintaining control performance.
    \item When a system matrix is unstable, we develop a new privacy-preserving mechanism with additional help of input Gaussian noise and show that a dynamic stochastic quantizer gives a better trade-off performance than a static stochastic quantizer again.
\end{enumerate}


\smallskip

\subsubsection*{Organization}
The remainder of this paper is organized as follows. In Section~\ref{sec:pre}, we provide a problem formulation and illustrate the utility of stochastic quatnizers against deterministic ones. In Section~\ref{sec:static}, we derive a sufficient condition for differential privacy and estimate an upper bound on tracking error for a static stochastic quantizer. Subsequently, we demonstrate a trade-off between privacy and control performances. In Section~\ref{sec:dsq}, we demonstrate the utility of a dynamic stochastic quantizer to improve the trade-off when a system matrix is Schur stable. In Section~\ref{sec:ust}, we consider the unstable case by additionally applying input Gaussian noise. Section~\ref{sec:sim} provides a numerical example to illustrate the proposed results. Finally, Section~\ref{sec:con} concludes the paper. All proofs are given in the Appendix.


\smallskip

\subsubsection*{Notation}
The sets of real numbers and non-negative integers are denoted by $\bR$ and $\bZ_{+}$, respectively. The absolute value of a real number is denoted by $|\cdot|$. The vector $i$-norm or induced matrix $i$-norm is denoted by $| \cdot |_i$, $i=1,2$. For $P \in \bR^{n \times n}$, $P \succ 0$ (resp. $P \succeq 0$) means that $P$ is symmetric and positive (resp. semi) definite. A probability space is denoted by $(\Omega, \cF, \bP )$, where $\Omega$, $\cF$, and $\bP $ denote the sample space, $\sigma$-algebra, and probability measure, respectively.



\section{Preliminaries}\label{sec:pre}

In this paper, our interest is to protect, as private information, the initial state of a system from being inferred when commissioning a fusion center to design control signals for achieving tracking control. For tracking control, the steady-state can be public information especially when the reference signal is public/eavesdropped. In contrast, transient behavior determined system's properties can be protected, which can be formulated as privacy protection of the initial state.

Figure \ref{fig:diagram} shows the architecture of the considered mechanism, consisting of a local system, a remote fusion center, and communication networks. The local system sends its measurements to the remote center after quantization. Then, the fusion center returns a control input signal. Local information in the green dash box is not eavesdropped while information of communication networks and the fusion center including reference signal in the red dash box can be eavesdropped. When information is communicated through networks, they are quantized due to communication capacity. It is well known that quantization degenerates control performance. In contrast, it can increase privacy performance. In this paper, we investigate the role of quantizers in privacy protection. In particular, we show the utility of static/dynamic stochastic quantizers to balance control and privacy performance.

\begin{figure}
    \centering    \includegraphics[width = 0.5\textwidth]{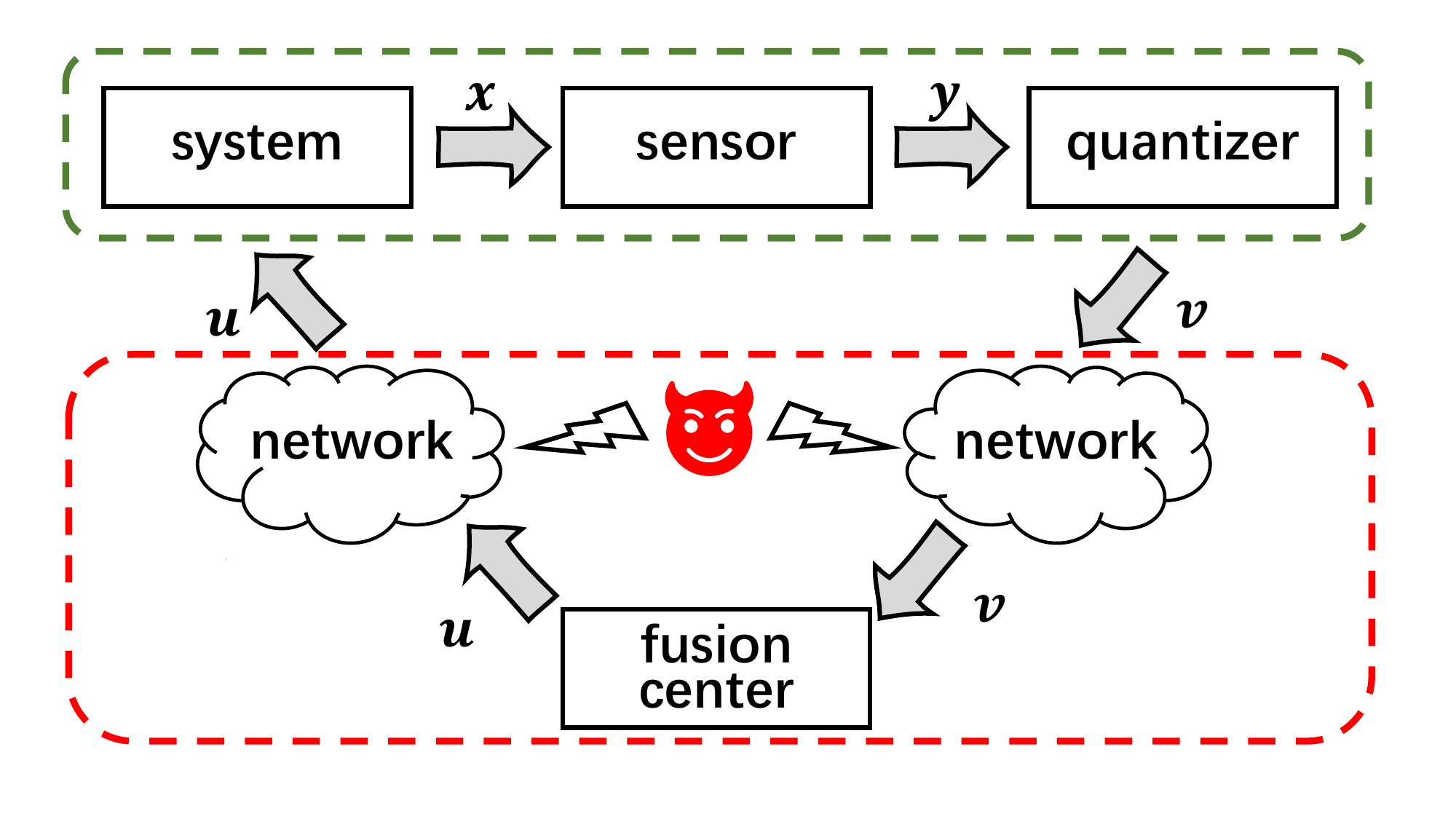}
    \caption{Mechanism diagram: The system sends its quantized information to the fusion center, and then the fusion center returns a control input to the system through communication networks, where local information in the green dash box is not eavesdropped while information in the red dash box can be.}
    \label{fig:diagram}
\end{figure}


\subsection{Problem Formulation}
The control objective in this paper is tracking control.
The system model including the sensor is given by
\begin{subequations}\label{eq:sys}
 \begin{align}
 \label{system_1}
	x(k+1) &= Ax(k) + B u(k), \quad x(0) = x_0 \\
 \label{system_2}
    y(k) &= C x(k),\\
\label{system_3}
	y_p(k) &= H_p x(k),
 \end{align}
\end{subequations}
where $x \in \bR^{n_1}$, $u \in \bR^m$, $y \in \bR^p$, and $y_p \in \bR^{q}$ denote the state,  control input, measured output, and tracking output, respectively.

The reference signal to be tracked is generated by the following exo-system:
\begin{subequations}\label{reference_sys}
\begin{align}
\label{reference dynamics}
    x_r(k+1) &= A_r x_r(k), \quad x_r(0) = x_{r,0}\\
    \label{reference_output}
    y_r(k) &= H_r x_r(k),
\end{align}
\end{subequations}
where $x_r \in \bR^{n_2}$ and $y_r \in \bR^q$. The control objective is 
\begin{align}\label{eq:conob}
\lim_{k \rightarrow \infty} e_y (k) = 0, \quad  e_y (k) := y_p(k) - y_r(k). 
\end{align}

Differently from the standard output regulation problem~\cite{huang2004nonlinear}, we assume that the state $x_r$ of the exo-system is directly available and is not needed to be estimated. In this case, the tracking problem in~\eqref{eq:conob} can be solved even if one relaxes the detectability assumption; for more details, see Remark~\ref{rem:detectability} below. Namely, the tracking problem can be solved under the following assumptions.

\begin{secasm}
\label{asm:1}
$A_r$ has no eigenvalues with modulus smaller than 1.
\red
\end{secasm}
\begin{secasm}
\label{asm:2}
The pair $(A, B)$ is stabilizable.
\red
\end{secasm}
\begin{secasm}
\label{asm:3}
The pair
$(C, A)$ is detectable.
\red
\end{secasm}
\begin{secasm}
\label{asm:4}
The following two equations:
$$
\begin{aligned}
X A_r & =A X+B U, \\
0 & =H_p X-H_r
\end{aligned}
$$
have a pair of solutions $X \in \bR^{n_1 \times n_2}$ and $U \in \bR^{m \times n_2}$.
\red
\end{secasm}

\begin{secrem}\label{rem:detectability}
In the standard output regulation problem~\cite{huang2004nonlinear}, the detecability of the following pair is assumed:
\begin{align*}
\left( \begin{bmatrix}H_p & - H_r \end{bmatrix}, \begin{bmatrix}A_p & 0 \\ 0 & A_r \end{bmatrix} \right).
\end{align*}
However, if $x_r$ is directly available, Assumption \ref{asm:3} is enough to guarantee the solvability of the output regulation problem as shown in~\cite{Yu2020}. 
\red
\end{secrem}

According to~\cite{Yu2020}, the tracking problem in~\eqref{eq:conob} can be solved by the following dynamic controller (when no signal is quantized, i.e., $v = y$): 
\begin{align}\label{eq:fusion}
 \hat{x}(k+1) &= (A + LC + BK_x)\hat{x}(k) + BK_r x_r(k) - Lv(k) \nonumber\\
	  &=  A \hat{x}(k) + B u(k) + L (C \hat{x}(k) - v(k) ), \\
   u(k) & = K_x \hat{x}(k) + K_r x_r(k), \nonumber
\end{align}
where $\hat x \in \bR^{n_1}$ denotes the state of the controller dynamics. This controller solves the tracking problem if tuning parameters $L$, $K_x$, and $K_r$ are selected such that $A + B K_x$ and $A + L C$ are exponentially stable and $K_r = U - K_x X$ for $X$ and $U$ in Assumption~\ref{asm:4}~\cite{Yu2020}. 

The fusion center is the system~\eqref{eq:fusion} with the quantized measurement $v$ generated by
\begin{align}\label{eq:Q}
v &= \mathcal{Q}_{v}(y)
:=\begin{bmatrix}
\mathcal{Q}_{v}(y_1) \\ \vdots \\ \mathcal{Q}_{v}(y_p)
\end{bmatrix},
\end{align}
where $\mathcal{Q}_{v}$ is a quantizer subject to communication capacity constraints. There are freedoms for its design, including the selection of quantizers such as static/dynamic and deterministic/stochastic. When a stochastic quantizer is implemented, this is independent and identically distributed (i.i.d.) with respect to the vector components.

Due to quantization, tracking performance is degenerated, i.e., it can cause tracking error. In contrast, it becomes difficult to infer the actual output $y$ in transient for potential eavesdroppers who may access to the fusion center and communication networks. Accordingly, private information, i.e., the initial state $x_0$ of the system can be protected against eavesdroppers. 
In this paper, we are interested in the trade-off between control and privacy performance, stated below. The formal definition of privacy performance will be introduced later.

\begin{secprob}
\label{prob:1}
Given a system~\eqref{eq:sys}, exo-system~\eqref{reference_sys}, and fusion center~\eqref{eq:fusion}, design a quantizer~\eqref{eq:Q} to balance control and privacy performances, where the control objective is to track a reference signal as in~\eqref{eq:conob} and privacy objective is to protect the initial state $x_0$ from being inferred from $v$ and $u$.
\red
\end{secprob}

\begin{secrem}
In this paper, we do not assumes that the input signal $u(k)$ is quantized because $u(k)$ can be eavesdropped. Namely, quantization of $u(k)$ only affects to control performance analysis, which is not relevant with our main interest: the trade-off between control and privacy performances.
    \red
\end{secrem}


\subsection{A Motivating Example for Employing Stochastic Quantziers}
As mentioned above, our interest is to design a quantizer for balancing control and privacy performances. It is well known that static deterministic quantizers sometimes lead to significant degeneration of control performance \cite{fagnani2003stability,liverani1995decay,buzzi1997intrinsic} while this issue can be resolved by using stochastic quantizers. We first confirm this by the following example.

\begin{figure}
    \centering    \includegraphics[width = 0.5\textwidth]{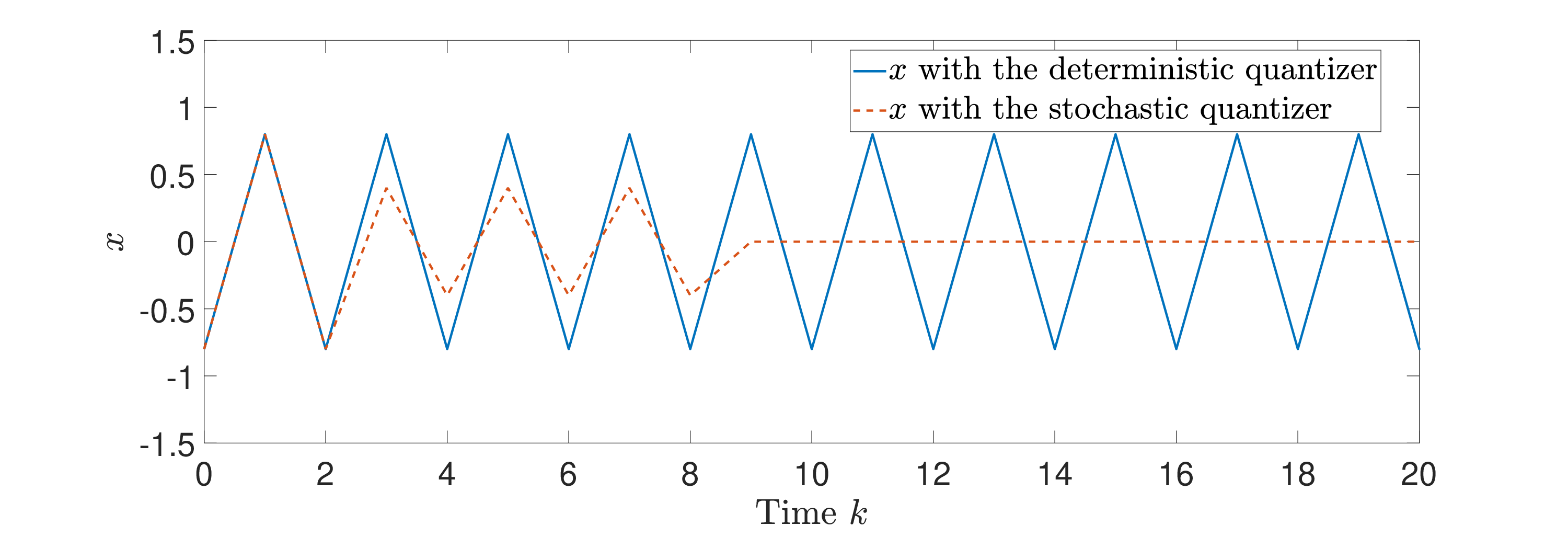}
    \caption{Deterministic quantizer versus stochastic quantizer}
    \label{fig:example_compare}
\end{figure}

\begin{secex} 
Consider the system~\eqref{eq:sys} with
$$
A = -1, \; B = 0.2, \; C = 1, \; H_p = 1.
$$
As the control objective, we consider stabilization. Then, the exo-system~\eqref{reference_sys} is selected by 
$$
A_r = 0, \; H_r = 1. 
$$
The fusion center~\eqref{eq:fusion} is designed by
$$
L = -1, \; K_x = 1, \; K_r = 0.
$$
As a quantizer~\eqref{eq:Q}, we employ the following deterministic uniform quantizer:
\begin{equation}
\label{de_quantizer}
    \mathcal{Q}_v(z+nd) = nd \quad \text{for }  z  \in \left( -\frac{d}{2}, \frac{d}{2} \right], \; n \in \bZ,\; d>0,
\end{equation}
where the quantization step $d$ is selected by $d = 2$. Then, the quantizer does not distinguish any value in $(-1, 1]$. When $x_0 = -0.8$, the quantized output $v(k)$ is zero for all $k \in \bZ_+$. Accordingly, the control input $u(k)$ is identical to zero, and thus, it fails for stabilization as in Fig. \ref{fig:example_compare}. 

Next, we employ the following static stochastic quantizer:
\begin{align}
\label{static}
\left\{\begin{array}{l}
\displaystyle \bP(\mathcal{Q}_v(z+nd) = nd) = 1 - \frac{z}{d},\\[2mm]
\displaystyle \bP(\mathcal{Q}_v(z+nd) = (n+1)d) = \frac{z}{d}
\end{array}\right.\\
\text{for } z \in (0, d], \; n\in \bZ, \; d > 0,
\nonumber
\end{align}
where $d = 2$. As confirmed in Fig. \ref{fig:example_compare}, the stochastic quantizer improves control performance.
\red
\end{secex}

As shown above, a stochastic quantizer can lead to a better control performance by breaking periodic behavior. Moreover, as shown below, introducing stochasity enable us privacy analysis in terms of differential privacy \cite{dwork2006differential} while a privacy concept in the deterministic setting is not well established.
By these two reasons, we in this paper focus on stochastic quantizers to analyze a trade-off between control and privacy performances. We first study the static case and then the dynamic case and show that a dynamic quantizer provides more freedom to balance the trade-off.



\section{Trade-off Analysis for Static Stochastic Quantizers}\label{sec:static}
In this section, we study a trade-off between control and privacy performance when implementing a static stochastic quantizer. As a measure of privacy performance, we employ differential privacy~\cite{dwork2006differential}. As control performance, we evaluate tracking error. Through their lower or upper bounds, we formulate their trade-off.

\subsection{Differential Privacy}
In this subsection, we use differential privacy as a privacy criterion of a system~\eqref{eq:sys} with quantized output $v$ in~\eqref{eq:Q}.
To show its definition, let us define
\begin{align}
O_k &:= 
\begin{bmatrix} C^{\top} & (CA)^{\top} & \cdots & (CA^k)^{\top} \end{bmatrix}^{\top}, \nonumber\\
N_k &:=\left[\begin{array}{ccccc}
0 & 0 & \cdots & \cdots & 0 \\
C B & 0 & \ddots & & \vdots \\
C A B & C B & 0 & \ddots & \vdots \\
\vdots & \vdots & \ddots & \ddots & 0 \\
C A^{k-1} B & C A^{k-2} B & \cdots & C B & 0
\end{array}\right], \label{eq:Nk}\\
U_k &:= \begin{bmatrix} u^{\top}(0) & u^{\top}(1) &\cdots  & u^{\top}(k)\end{bmatrix}^{\top}, \nonumber\\
V_k &:= \begin{bmatrix} v^{\top}(0) & v^{\top}(1) &\cdots  & v^{\top}(k)\end{bmatrix}^{\top}.\nonumber
\end{align}
From~\eqref{eq:sys} and~\eqref{eq:Q}, we have
\begin{align}\label{eq:mech}
V_k = \cQ_v(O_k x_0 + N_k U_k).
\end{align}
We call~\eqref{eq:mech} a \emph{mechanism}, following standard notions in differential privacy~\cite{dwork2006differential,Yu2020}. Another standard notion is the following adjacency relation.

\begin{secdefn} 
Given $\zeta>0$, a pair of initial states $(x_0, x'_0) \in \bR^{n_1} \times \bR^{n_1} $ is said to belong to the \emph{$\zeta$-adjacency relation} if $ | x_0 - x'_0|_{1} \leq \zeta$. The set of all pairs of the initial states that are $\zeta$-adjacent is denoted by $\operatorname{Adj}_1^\zeta$.
\red
\end{secdefn}

A basic idea of differential privacy is to evaluate sensitivity of a mechanism with respect to the variable required to be private. To be more specific, the mechanism is viewed as highly private if for a pair $(x_0, x'_0) \in \operatorname{Adj}_1^\zeta$, the corresponding pair $(V_k, V'_k)$ is sufficiently close; note that $U_k$ is common for $(V_k, V'_k)$ because this can be eavesdropped. Based on this idea, differential privacy of a mechanism induced by a dynamical system is defined as follows by modifying \cite[Definition 2.4]{Yu2020}.

\begin{secdefn} \label{def:privacy}
Let $(\bR^{(k+1) p}, \cF, \bP )$ be a probability space. The mechanism~\eqref{eq:mech} with a stochastic quantizer is said to be \emph{$(\varepsilon, \delta)$-differentially private} for $\operatorname{Adj}_1^\zeta$ at a finite time instant $t \in \bZ_{+}$ if there exist $\varepsilon \geq 0$ and $\delta \geq 0$ such that
\begin{align}
&\bP \left(\mathcal{Q}_v(O_k x_0+N_k U_k) \in \cS\right) \nonumber \\
\label{eq:privacy}
&\leq \mathrm{e}^{\varepsilon} \bP \left(\mathcal{Q}_v(O_k x'_0+N_k U_k) \in \cS\right)+\delta, 
\quad \forall \cS \in \cF
\end{align}
for any $(x_0, x'_0) \in \mathrm{Adj}_1^\zeta$ and $U_k \in \bR^{(k+1)m}$.
\red
\end{secdefn}

In Definition~\ref{def:privacy}, a distance between $(V_k, V'_k)$ is evaluated in the stochastic sense. If $\varepsilon , \delta \ge 0$ are small, the distance is close, i.e., a mechanism~\eqref{eq:mech} is highly private.


\subsection{Privacy Analysis}
As a stochastic quantizer, we use the one in~\eqref{static}. It is expected that the initial state tends to be more private when the quantization step $d$ is larger, which is shown to be true. Moreover, we provide a lower bound on the quantization step $d$ to achieve a desired differential privacy level.

For the linear system $x(k+1) = A x(k)$ with $A$ in~\eqref{eq:sys}, it is well known that there exist $\beta$, $\lambda > 0$ such that
\begin{align}\label{eq:sys_norm}
|x(k) - x'(k)|_1 \leq \beta \lambda^k |x_0 - x'_0|_1, \quad \forall k \in \bZ_+
\end{align}
for all $(x_0, x'_0) \in \bR^{n_1} \times \bR^{n_1}$. Note that $A$ is not necessarily to be Schur stable, i.e., $\lambda$ can be greater than or equal to $1$. 

Using $\beta, \lambda > 0$, we can derive a lower bound on the quantization step to achieve a desired differential privacy level. This estimation stated below is the first main result of this paper.

\begin{secthm}\label{thm:dp}
Given $k \in \bZ_{+}$, $\zeta > 0$, and $\delta \in (0, 1)$, a mechanism~\eqref{eq:mech} with a static stochastic quantizer~\eqref{static} is $(0,\delta)$ differentially private for $\operatorname{Adj}_1^\zeta$ at $k$ if the quantization step $d$ satisfies
\begin{align}\label{eq1:dp}
d \geq \sum_{t = 0}^{k} \frac{\beta |C|_1 \lambda^{t} \zeta}{\delta}.
\end{align}
\end{secthm}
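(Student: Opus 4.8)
The plan is to read off that $(0,\delta)$ differential privacy with $\varepsilon=0$ is nothing but a bound on the total variation distance between the two output laws of the quantizer. Write $a := O_k x_0 + N_k U_k$ and $a' := O_k x'_0 + N_k U_k$; since the (eavesdropped) input block $U_k$ is shared, $a - a' = O_k(x_0 - x'_0)$ does not depend on $U_k$. Applying the defining inequality~\eqref{eq:privacy} to a set $\cS$ and to its complement shows that demanding~\eqref{eq:privacy} for every $\cS \in \cF$ with $\varepsilon = 0$ is equivalent to
\[
\sup_{\cS \in \cF}\bigl(\bP(\cQ_v(a)\in\cS)-\bP(\cQ_v(a')\in\cS)\bigr)=d_{\mathrm{TV}}\bigl(\cQ_v(a),\cQ_v(a')\bigr)\le\delta,
\]
so it suffices to bound this total variation distance uniformly over $U_k$ and over $\zeta$-adjacent pairs, and then to pick $d$ large enough to push it below $\delta$.

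The core step is a single-coordinate Lipschitz estimate in total variation. For a scalar input $w$, the rule~\eqref{static} gives $\bP(\cQ_v(w)=jd)=\max\{0,\,1-|w-jd|/d\}$ for each $j\in\bZ$; that is, the mass assigned to each grid point $jd$ is a triangular ``tent'' of slope $\pm 1/d$, and these tents form a partition of unity. On any cell $(nd,(n+1)d)$ exactly two of them are active, with derivatives $-1/d$ and $+1/d$, so $\sum_j\bigl|\tfrac{d}{dw}\bP(\cQ_v(w)=jd)\bigr| = 2/d$; integrating yields $\sum_j|\bP(\cQ_v(w)=jd)-\bP(\cQ_v(w')=jd)|\le 2|w-w'|/d$, i.e. $d_{\mathrm{TV}}(\cQ_v(w),\cQ_v(w'))\le|w-w'|/d$. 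Since the quantizer acts i.i.d. across the $(k+1)p$ coordinates, subadditivity of total variation for product measures gives
\[
d_{\mathrm{TV}}\bigl(\cQ_v(a),\cQ_v(a')\bigr)\le\frac1d\sum_i|a_i-a'_i|=\frac{|O_k(x_0-x'_0)|_1}{d}.
\]

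It then remains to estimate $|O_k(x_0-x'_0)|_1$. By the block structure of $O_k$ this equals $\sum_{t=0}^k|CA^t(x_0-x'_0)|_1$. Reading $A^t(x_0-x'_0)$ as the free-response difference and using the exponential bound~\eqref{eq:sys_norm} with $|x_0-x'_0|_1\le\zeta$, together with submultiplicativity of $|\cdot|_1$, gives $|CA^t(x_0-x'_0)|_1\le\beta|C|_1\lambda^t\zeta$. Summing over $t$ and dividing by $d$ produces $d_{\mathrm{TV}}(\cQ_v(a),\cQ_v(a'))\le\sum_{t=0}^k\beta|C|_1\lambda^t\zeta/d$, and requiring the right-hand side to be at most $\delta$ is exactly~\eqref{eq1:dp}; as all estimates are uniform in $U_k$ and in the adjacent pair, the claim follows. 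The \emph{main obstacle} is the single-coordinate bound: proving it uniformly no matter how $w$ and $w'$ sit relative to the grid (same cell, neighbouring cells, or far apart). The tent-function/partition-of-unity picture is precisely what makes it uniform and avoids an error-prone case analysis; the reduction to total variation, the tensorization over coordinates, and the observability-matrix estimate are then routine.
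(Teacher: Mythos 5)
Your proof is correct, and while its overall architecture (a per-coordinate Lipschitz bound on the quantizer's output law, tensorization across coordinates, then the observability estimate $|CA^t(x_0-x_0')|_1\le\beta|C|_1\lambda^t\zeta$) matches the paper's, the two key steps are carried out by genuinely different means. The paper proves the per-coordinate bound as a standalone lemma by exhaustive case analysis: it distinguishes whether $y$ and $y'$ lie in the same or in adjacent quantization cells and then checks every subset of the two or three reachable grid points, under the standing hypothesis $s\le d$. Your tent-function argument --- the output pmf at each grid point is the continuous hat function $\max\{0,1-|w-jd|/d\}$, the hats form a partition of unity with total slope $2/d$ on every cell, and integrating the derivative bounds the $\ell_1$ distance of the pmfs --- yields the same $|w-w'|/d$ bound uniformly, with no case split and no restriction $|w-w'|\le d$ (the paper must separately note that \eqref{eq1:dp} with $\delta<1$ forces $\beta|C|_1\lambda^t\zeta<d$ so its lemma applies). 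For the tensorization, the paper telescopes products of probabilities after asserting that every event can be written as a product set $\cS=\cS_0\times\cdots\times\cS_k$ (and implicitly as a product over the $p$ vector components as well); this assertion is not literally true for arbitrary $\cS\in\cF$, so the paper's argument, read strictly, covers only product events. Your route --- identifying $(0,\delta)$ differential privacy with a total variation bound and invoking subadditivity of total variation for product measures --- closes that small rigor gap and handles all measurable sets at once. In short, your proof buys uniformity and generality with standard machinery, while the paper's is self-contained and elementary at the cost of casework and a restriction on the event class.
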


\begin{proof}
The proof is in Appendix~\ref{app1}.
\end{proof}

The above theorem shows that for the considered mechanism, we obtain $(0,\delta)$ differential privacy, i.e., $\varepsilon = 0$. For making the quantization step $d$ large, we can achieve differential privacy for arbitrary $\delta > 0$. If $\beta$, $\lambda > 0$ in \eqref{eq:sys_norm} are small, the quantization step can be selected smaller to achieve the same differential privacy level. Moreover, if $A$ is Schur stable, i.e., $\lambda \in (0,1)$, we can take $k \to \infty$ as stated below.

\begin{seccor} \label{cor:dp}
    Given $\zeta > 0$ and $\delta \in (0, 1)$, a mechanism~\eqref{eq:mech} with a static stochastic quantizer~\eqref{static} is $(0,\delta)$ differentially private for $\operatorname{Adj}_1^\zeta$ for any $k \in \bZ_+$ if the quantization step $d$ satisfies
\begin{align}\label{eq2:dp}
d \geq \frac{\beta |C|_1 \zeta}{(1- \lambda)\delta}.
\end{align}
\end{seccor}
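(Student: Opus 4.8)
The plan is to obtain this corollary as a direct consequence of Theorem~\ref{thm:dp}, the key point being that Schur stability renders the quantization-step bound uniform in $k$. When $A$ is Schur stable we have $\lambda \in (0,1)$, so the finite sum appearing in the theorem's condition~\eqref{eq1:dp} is a partial sum of a convergent geometric series, hence controlled by a $k$-independent constant.

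Concretely, I would first rewrite~\eqref{eq1:dp} as $d \geq \frac{\beta |C|_1 \zeta}{\delta} \sum_{t=0}^{k} \lambda^t$. Since $0 < \lambda < 1$, the geometric series bound
$$\sum_{t=0}^{k} \lambda^t \leq \sum_{t=0}^{\infty} \lambda^t = \frac{1}{1-\lambda}$$
holds for every $k \in \bZ_+$, and therefore $\frac{\beta |C|_1 \zeta}{\delta} \sum_{t=0}^{k} \lambda^t \leq \frac{\beta |C|_1 \zeta}{(1-\lambda)\delta}$. Consequently, any $d$ satisfying the corollary's bound~\eqref{eq2:dp} automatically satisfies the theorem's hypothesis~\eqref{eq1:dp} at every finite time instant $k$ simultaneously. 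Invoking Theorem~\ref{thm:dp} at each such $k$ then yields $(0,\delta)$ differential privacy for $\operatorname{Adj}_1^\zeta$ at every $k \in \bZ_+$, which is exactly the claim.

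There is no genuine analytical obstacle here; the whole content of the corollary is the uniformization afforded by convergence of the geometric series. Whereas Theorem~\ref{thm:dp} produces a step size that grows with the time horizon $k$, the summability $\sum_{t=0}^{\infty} \lambda^t < \infty$ under Schur stability lets a single, $k$-independent quantization step $d$ guarantee the same privacy level at \emph{all} time instants at once. The only point meriting care is to state this uniformity explicitly, so that the conclusion is read as a statement holding for every $k \in \bZ_+$ under one fixed $d$, rather than merely for each individually fixed $k$ as in the preceding theorem.
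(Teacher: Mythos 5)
Your proposal is correct and follows essentially the same route as the paper: the paper's one-line proof (``taking $k \to \infty$ in~\eqref{eq1:dp} with $\lambda \in (0,1)$'') is precisely the geometric-series uniformization you spell out, namely $\sum_{t=0}^{k} \lambda^t \leq \frac{1}{1-\lambda}$ for all $k \in \bZ_+$, so that~\eqref{eq2:dp} implies~\eqref{eq1:dp} at every time instant under a single fixed $d$. Your write-up merely makes explicit the uniformity-in-$k$ point that the paper leaves implicit.
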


\begin{proof}
 The statement can be shown by taking $k \to \infty$ in~\eqref{eq1:dp} with $\lambda \in (0, 1)$.
\end{proof}


\subsection{Trade-off between Control and Privacy Performances} 
    From Theorems \ref{thm:dp} and \ref{cor:dp}, a mechanism~\eqref{eq:mech} with a static stochastic quantizer~\eqref{static} can be made more private by selecting the quantization step $d$ larger. However, this degenerates control performance. In this subsection, we investigate a trade-off between control and privacy performances.
    
   As a criterion of control performance, we evaluate the tracking error $e_y  = y_p - y_r$ in~\eqref{eq:conob}, caused by quantization:
\begin{equation}\label{J}
    J =  \lim_{k \to \infty}  \bE [e_y ^{\top}(k) Q e_y (k)]
\end{equation}
with a weight $0 \preceq Q \in \bR^{n \times n}$.

Due to the nonlinearity of a quantizer \eqref{static}, it is difficult to compute the exact value of $J$ in \eqref{J}. However, as one can observe from the form of a stochastic quantizer~\eqref{static}, the error due to the quantization is bounded. This enables us to calculate the upper bound on $J$ as follows. 

\begin{secthm}\label{thm:J}
Under Assumptions~\ref{asm:1}-\ref{asm:4}, consider the closed-loop system consisting of a system~\eqref{eq:sys} and fusion center~\eqref{eq:fusion} with a static stochastic quantizer~\eqref{static}. Let $K_x$ and $L$ be designed such that $A + B K_x$ and $A + L C$ are Schur stable, respectively. Also, select $K_r := U - K_x X$ for $X$ and $U$ in Assumption~\ref{asm:4}. Then, for any given $Q \succeq 0$, the control performance index $J$ in~\eqref{J} is upper bounded as in
\begin{align}\label{eq:Jub}
J \leq \frac{d^2}{2} {\rm trace}(H_p^\top Q  H_p) {\rm trace} (Z)
\end{align}
for any initial state $(x(0), x_r(0), \hat x(0)) \in \bR^{n_1} \times \bR^{n_2} \times \bR^{n_1}$, where $Z \succeq 0$ is a solution to
\begin{align}
&Z = \cA  Z \cA ^{\top}  + \begin{bmatrix}
        I \\ I
    \end{bmatrix}
    L L^\top
    \begin{bmatrix}
        I \\ I
    \end{bmatrix}^\top, \label{eq:Lya}\\
&\quad
\cA  := \begin{bmatrix}
        A+BK_x & LC \\
        0 & A+LC
    \end{bmatrix}.\label{eq:cA}
\end{align}
\end{secthm}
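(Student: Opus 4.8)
The plan is to express the control index $J$ as a stationary, $Q$-weighted second moment of the tracking error $e_y$, realized as the output of a Schur-stable linear system driven by the quantization noise, and then to bound that second moment through the Lyapunov solution $Z$ of~\eqref{eq:Lya}. First I would introduce the quantization noise $w(k) := v(k) - y(k) = \cQ_v(Cx(k)) - Cx(k)$ and pass to error coordinates. With the estimation error $\tilde x := x - \hat x$ and the shifted coordinates $\xi_1 := X x_r - \hat x$, $\xi_2 := \tilde x$, a direct computation that uses Assumption~\ref{asm:4} ($X A_r = AX + BU$ and $H_r = H_p X$) together with $K_r = U - K_x X$ to cancel every reference-dependent term gives the driftless dynamics $\xi(k+1) = \cA\,\xi(k) + \begin{bmatrix} L \\ L \end{bmatrix} w(k)$, with $\cA$ exactly as in~\eqref{eq:cA} and the noise channel $\begin{bmatrix} L \\ L \end{bmatrix}$ matching the driving term of~\eqref{eq:Lya}; moreover $e_y(k) = H_p(x(k) - X x_r(k)) = H_p(\xi_2(k) - \xi_1(k))$. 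Since $\cA$ is block upper triangular with the Schur-stable blocks $A+BK_x$ and $A+LC$ on its diagonal, $\cA$ is Schur, so~\eqref{eq:Lya} admits a unique $Z \succeq 0$.

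Next I would record the two statistical facts about $w(k)$ produced by~\eqref{static}: conditioned on the data up to time $k$ (hence on $\xi(k)$) it has zero mean, and, since the quantizer acts componentwise and independently, its conditional second-moment matrix is diagonal with $i$-th entry $z_i(d-z_i) \le d^2/4$, so $\bE[w(k)w(k)^\top \mid \xi(k)] \preceq \tfrac{d^2}{4} I$. The zero-mean property makes $w(k)$ uncorrelated with $\xi(k)$, so the second-moment matrix $P(k) := \bE[\xi(k)\xi(k)^\top]$ satisfies $P(k+1) = \cA P(k) \cA^\top + \begin{bmatrix} L \\ L \end{bmatrix} \bE[w(k)w(k)^\top] \begin{bmatrix} L \\ L \end{bmatrix}^\top$ with no surviving cross terms.

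Finally I would bound the driving term by $\tfrac{d^2}{4}\begin{bmatrix} L \\ L \end{bmatrix}\begin{bmatrix} L \\ L \end{bmatrix}^\top$ and compare this recursion with the fixed-point iteration defining $Z$; by monotonicity of the Lyapunov map $S \mapsto \cA S \cA^\top + (\cdot)$ and Schur stability of $\cA$, this yields $\limsup_k P(k) \preceq \tfrac{d^2}{4} Z$ for every initial state, so the initial condition drops out. Writing $Z_{11}, Z_{22}$ for the $n_1 \times n_1$ diagonal blocks of $Z$ and using $(\xi_2-\xi_1)(\xi_2-\xi_1)^\top \preceq 2\xi_1\xi_1^\top + 2\xi_2\xi_2^\top$, I obtain $\limsup_k \bE[e_y(k) e_y(k)^\top] \preceq \tfrac{d^2}{2} H_p (Z_{11}+Z_{22}) H_p^\top$. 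Taking the $Q$-weighted trace, invoking ${\rm trace}(PR) \le {\rm trace}(P)\,{\rm trace}(R)$ for $P,R \succeq 0$ with $P = H_p^\top Q H_p$ and $R = Z_{11}+Z_{22}$, and noting ${\rm trace}(Z_{11}+Z_{22}) = {\rm trace}(Z)$, then delivers the bound~\eqref{eq:Jub} on $J$ in~\eqref{J} (interpreting the divergence-free $\limsup$ as the relevant upper bound should the limit fail to exist).

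The step I expect to be the main obstacle is the passage to the stationary bound: because the conditional covariance of $w(k)$ is time-varying, depending on the current fractional parts $z_i$, there is no single stationary covariance to read off, and one must argue via the monotone Lyapunov comparison and Schur stability to get $\limsup_k P(k)\preceq \tfrac{d^2}{4}Z$ uniformly over initial states. Identifying the coordinates that reproduce $\cA$ and $\begin{bmatrix} L \\ L \end{bmatrix}$ exactly (the sign convention built into $\xi_1$ matters) is the other point needing care; everything after that is routine positive-semidefinite trace algebra, and the factor-of-two slack together with the product-of-traces step is precisely where the estimate becomes conservative.
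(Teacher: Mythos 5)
Your proposal is correct and follows essentially the same route as the paper's proof: the same error coordinates (up to a sign flip that is immaterial for second moments), the same zero-mean and $d^2/4$ moment bounds on the quantization error, the same Lyapunov recursion and comparison with $Z$ under Schur stability of $\cA$, and the same trace product inequality yielding the factor $d^2/2$. The only differences are cosmetic (where the factor of two is introduced, and your slightly more careful conditional-moment phrasing versus the paper's independence claim).
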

\begin{proof}
The proof is in Appendix~\ref{app2}.
\end{proof}

The upper bound on $J$ in~\eqref{eq:Jub}  is an increasing function of $d$. In contract, the lower bound \eqref{eq1:dp} on the differential privacy level is an decreasing function of $d$. To make this trade-off clearer, we now substitute the smallest $d$ satisfying \eqref{eq1:dp} into \eqref{eq:Jub}, yielding 
\begin{align*}
J \leq \frac{1}{2} \left( \sum_{t = 0}^{k} \frac{\beta |C|_1 \lambda^{t} \zeta}{\delta} \right)^2 {\rm trace}(H_p^\top Q  H_p) {\rm trace} (Z).
\end{align*}
This shows that increasing privacy performance, i.e., decreasing $\delta$ leads to degeneration of the control performance.



\section{Improving Trade-off by Dynamic Stochastic Quantizers}\label{sec:dsq}
As discussed in the previous section, there is a trade-off between control and privacy performances. In this section, we show that this trade-off can be improved by employing a dynamic stochastic quantizer when $A$ is Schur stable.

At the beginning of control, the quantization step can be made large to protect the initial state from being inferred and then can be made small to improve the control performance. Based on this idea, we focus on designing a \emph{zoom-in} dynamic stochastic quantizer given by 
\begin{align}
\label{dynamic}
\left\{\begin{array}{l}
\displaystyle \bP(\mathcal{Q}_v(z+nd(k)) = nd(k)) = 1 - \frac{z}{d(k)},\\[2mm]
\displaystyle \bP(\mathcal{Q}_v(z+nd(k))) = (n+1)d(k)) = \frac{z}{d(k)}
\end{array}\right.\\
\text{for } z \in (0, d(k)], \; n\in \bZ, \; d(k) > 0,
\nonumber
\end{align}
where for given $d^* \in [0, d(0)]$,
\begin{align*}
\lim_{k \rightarrow \infty} d(k) = d^*.
\end{align*}
Also, $d(k)$ is a decreasing function with respect to $k$ under the following constraint determined by $0 < q < 1$:
\begin{align}
\label{constraint_quantizer}
    (d(0) - d^*) q^k \le d(k)-d^* \le d(0)- d^*.
\end{align}
This constraint means that the quantization step cannot decrease too fast. Such a quantizer is designed in, e.g., \cite[Theorem 9]{Brockett_quantizer}.

As the main result of this section, we extend Theorem~\ref{thm:dp} for differential privacy analysis to the dynamic stochastic quatnizer as follows.
\begin{secthm}
\label{thm:dyn_quantizer_arbi}
     Given $k \in \bZ_{+}$, $\zeta > 0$, and $\delta \in (0, 1)$, a mechanism~\eqref{eq:mech} with a dynamic stochastic quantizer~\eqref{dynamic} under the constraint \eqref{constraint_quantizer} is $(0,\delta)$ differentially private for $\operatorname{Adj}_1^\zeta$ at $k$ if one of the following two conditions holds:
     \begin{enumerate}
     \item the quantization step $d(0)$ satisfies 
     \begin{align}\label{eq1:dp_dyn}
d(0) \geq \sum_{t = 0}^{k} \frac{\beta |C|_1 \lambda^{t} \zeta}{q^t \delta},
\end{align}
     and $\lambda < q < 1 $, where $\beta, \lambda > 0$ are of~\eqref{eq:sys_norm};
     
     \item $d^*$ satisfies 
     \begin{align}
         d^* \geq \sum_{t = 0}^{k} \frac{\beta |C|_1 \lambda^{t} \zeta}{\delta}.
     \end{align}
     \end{enumerate}
\end{secthm}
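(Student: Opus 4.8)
The plan is to reduce $(0,\delta)$-differential privacy to a total-variation bound and then control that bound through the time-varying step via the constraint~\eqref{constraint_quantizer}, reusing the reduction and scalar estimate that already underlie Theorem~\ref{thm:dp}. Since $\varepsilon=0$ gives $\mathrm{e}^{\varepsilon}=1$, the requirement~\eqref{eq:privacy} over all $\cS\in\cF$ is equivalent to asking that the total-variation distance between the output laws $\cQ_v(O_k x_0+N_k U_k)$ and $\cQ_v(O_k x'_0+N_k U_k)$ be at most $\delta$, uniformly over $(x_0,x'_0)\in\operatorname{Adj}_1^\zeta$ and $U_k$ (symmetry of the adjacency relation and of total variation makes the one-sided bound suffice). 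The scalar building block I would invoke is that a single channel quantized with step $s$ assigns weight $p_n(a):=\max(0,1-|a-ns|/s)$ to each output level $ns$, $n\in\bZ$; these tent-shaped weights are $(1/s)$-Lipschitz in $a$ with $\sum_n|p_n'(a)|=2/s$ a.e., so that $\mathrm{TV}(\cQ_v(a),\cQ_v(a'))\le |a-a'|/s$.

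Next I would assemble the vector bound. Because the quantizer acts i.i.d.\ across the $p$ components at each time and across time, the law of $V_k$ is a product measure, and total variation is subadditive over independent coordinates; with step $d(t)$ at time $t$ and the $t$-th block of $O_k(x_0-x'_0)$ equal to $CA^t(x_0-x'_0)$, that block contributes at most $|CA^t(x_0-x'_0)|_1/d(t)$. Summing over $t=0,\dots,k$ and applying~\eqref{eq:sys_norm} in the form $|CA^t(x_0-x'_0)|_1\le \beta|C|_1\lambda^t|x_0-x'_0|_1\le \beta|C|_1\lambda^t\zeta$ gives
\begin{align*}
\mathrm{TV}(V_k,V'_k)\le \sum_{t=0}^{k}\frac{\beta|C|_1\lambda^t\zeta}{d(t)},
\end{align*}
so it remains only to lower-bound $d(t)$ so that this sum does not exceed $\delta$.

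The two listed conditions correspond to two such lower bounds read off from~\eqref{constraint_quantizer}. From $(d(0)-d^*)q^t\le d(t)-d^*$ and $d^*\ge 0$ I would deduce both $d(t)\ge d^*+(d(0)-d^*)q^t\ge d(0)q^t$ and $d(t)\ge d^*$. Substituting $d(t)\ge d(0)q^t$ yields $\mathrm{TV}(V_k,V'_k)\le d(0)^{-1}\sum_{t=0}^{k}\beta|C|_1\lambda^t\zeta/q^t$, which is $\le\delta$ exactly under the first condition; the hypothesis $\lambda<q<1$ keeps the terms $(\lambda/q)^t$ summable and is what would later permit the passage $k\to\infty$, as in Corollary~\ref{cor:dp}. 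Substituting instead $d(t)\ge d^*$ yields $\mathrm{TV}(V_k,V'_k)\le (d^*)^{-1}\sum_{t=0}^{k}\beta|C|_1\lambda^t\zeta$, which is $\le\delta$ under the second condition. In either case the total variation is at most $\delta$, proving $(0,\delta)$-differential privacy.

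I expect the only genuinely delicate ingredient to be the scalar total-variation estimate for the stochastic quantizer — the tent-weight representation and the verification that $\mathrm{TV}\le|a-a'|/s$ persists across cell boundaries — but this is precisely the lemma already used for Theorem~\ref{thm:dp}, so the incremental effort here is the elementary bookkeeping that turns~\eqref{constraint_quantizer} into the per-step bound $d(t)\ge\max\{d^*,\,d(0)q^t\}$.
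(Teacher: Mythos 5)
Your proposal is correct and mirrors the paper's own proof: both reduce $(0,\delta)$-differential privacy to a total-variation bound, decompose it over time via the i.i.d.\ product structure (the paper's telescoping argument \eqref{pf2:dp}--\eqref{pf3:dp} together with Lemma~\ref{lem1}, which you invoke as the scalar TV estimate), bound the $t$-th contribution by $\beta|C|_1\lambda^t\zeta/d(t)$, and then use the constraint~\eqref{constraint_quantizer} to get $d(t)\ge d^*+(d(0)-d^*)q^t\ge\max\{d^*,\,d(0)q^t\}$, from which the two conditions follow exactly as in Appendix~\ref{app3}. No gaps.
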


\begin{proof}
The proof is in Appendix~\ref{app3}.
\end{proof}

Similarly, we allow $k$ to infinity when the system matrix $A$ is Schur stable, which is stated as below.

\begin{seccor}
\label{thm:dyn_quantizer}
     Given $\zeta > 0$ and $\delta \in (0, 1)$, a mechanism~\eqref{eq:mech} with a dynamic stochastic quantizer~\eqref{dynamic} under the constraint \eqref{constraint_quantizer} is $(0,\delta)$ differentially private for $\operatorname{Adj}_1^\zeta$ for any $k \in \bZ_+$ if one of the following two conditions holds:
     \begin{enumerate}
     \item quantization step $d(0)$ satisfies 
     \begin{align}\label{eq1:dp_dyn2}
d(0) \geq  \frac{\beta |C|_1 q \zeta}{(q-\lambda) \delta},
\end{align}
     and $\lambda < q < 1 $, where $\beta, \lambda > 0$ are of~\eqref{eq:sys_norm};
     
     \item $d^*$ satisfies
     \begin{align}\label{eq2:dp_dyn2}
d^{*} \geq \frac{\beta |C|_1 \zeta}{(1- \lambda)\delta}.
\end{align} 
\end{enumerate}
\end{seccor}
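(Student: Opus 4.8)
The plan is to derive this corollary directly from Theorem~\ref{thm:dyn_quantizer_arbi} by passing to the limit $k \to \infty$, in exact parallel to the way Corollary~\ref{cor:dp} was obtained from Theorem~\ref{thm:dp}. The conceptual point is that requiring ``$(0,\delta)$ differential privacy for any $k \in \bZ_+$'' amounts to demanding that the sufficient condition of Theorem~\ref{thm:dyn_quantizer_arbi} hold simultaneously at every finite time instant. Since the right-hand sides of~\eqref{eq1:dp_dyn} and of the bound in part~(2) of that theorem are partial sums of series with nonnegative terms, they are nondecreasing in $k$; hence a quantization step that dominates the limiting (infinite) sum automatically dominates every partial sum, and this single uniform bound secures the finite-time condition at all $k$ at once.

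First I would handle part~(2). Letting $k \to \infty$ in $d^* \geq \sum_{t=0}^{k} \beta |C|_1 \lambda^t \zeta / \delta$ and invoking Schur stability of $A$, so that $\lambda \in (0,1)$ in~\eqref{eq:sys_norm}, the geometric series converges to $\beta |C|_1 \zeta / ((1-\lambda)\delta)$, which is precisely~\eqref{eq2:dp_dyn2}. Thus any $d^*$ satisfying~\eqref{eq2:dp_dyn2} dominates every finite partial sum, so condition~(2) of Theorem~\ref{thm:dyn_quantizer_arbi} holds at each $k$. For part~(1) I would proceed identically: the finite-time bound~\eqref{eq1:dp_dyn} can be rewritten as $d(0) \geq (\beta |C|_1 \zeta / \delta) \sum_{t=0}^{k} (\lambda/q)^t$, and the hypothesis $\lambda < q < 1$ is exactly what guarantees the ratio $\lambda/q < 1$. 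The series therefore converges as $k \to \infty$ to $1/(1 - \lambda/q) = q/(q-\lambda)$, yielding the closed-form bound $d(0) \geq \beta |C|_1 q \zeta / ((q-\lambda)\delta)$ in~\eqref{eq1:dp_dyn2}, which again dominates every partial sum.

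I do not expect a serious obstacle here, since the core is a convergence-of-geometric-series computation. The only steps needing care are essentially bookkeeping: confirming the monotonicity/uniformity reduction (that dominating the limiting sum implies dominating each finite partial sum, which is immediate once one observes all summands are nonnegative), and keeping straight the distinct roles of the two hypotheses. Schur stability $\lambda < 1$ is what makes the series in part~(2) summable, whereas the stronger constraint $\lambda < q < 1$ is what makes the reweighted series in part~(1) summable; without these, the partial sums would diverge and no finite quantization step could work uniformly in $k$.
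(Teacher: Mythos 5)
Your proposal is correct and follows essentially the same route as the paper, which likewise obtains the corollary by letting $k \to \infty$ in the bound~\eqref{eq1:dp_dyn} of Theorem~\ref{thm:dyn_quantizer_arbi} under $\lambda < q$ (and $\lambda < 1$ for part~(2)), summing the resulting geometric series. Your added observation that the nonnegativity of the summands makes the infinite sum dominate every partial sum is exactly the bookkeeping the paper's one-line proof leaves implicit.
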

\begin{proof}
The statement can be shown by taking $k$ to $\infty$ in~\eqref{eq1:dp_dyn} with $\lambda < q$.
\end{proof}

\begin{secrem}
The static case can be recovered, i.e., Theorem~\ref{thm:dyn_quantizer_arbi} (resp. Corollary~\ref{thm:dyn_quantizer}) reduces to Theorem~\ref{thm:dp} (Corollary~\ref{cor:dp}) by selecting $q=1$. 
\red
\end{secrem}

When $A$ is Schur stable, i.e., $\lambda < 1$, the mechanism can be made differential private by selecting $d(0)$ large by item 1). On the other hand, similarly to Theorem~\ref{thm:J}, an upper bound on the control performance $J$ in~\eqref{J} is estimated by
\begin{align}\label{eq:Jd*}
J \leq \frac{(d^*)^2}{2} {\rm trace}(H_p^\top Q  H_p) {\rm trace} (Z).
\end{align}
Consequently, the control performance $J$ can be improved by selecting $d^*$ small. In summary, when $A$ is Schur stable, both privacy and control performances can be improved by utilizing the dynamic stochastic quatnizer with large $d(0)$ and small $d^*$. This suggests to select a large initial quantization step for increasing the privacy performance and to decrease the quantization step over time for improving the control performance.



\section{Privacy Protection for Unstable Systems}\label{sec:ust}
As shown in the above sections, a static/dynamic stochastic quantizer can achieve the prescribed differential privacy level for any time instant only when $A$ is Schur stable. In this section, we discuss the case where $A$ is unstable by applying input noise in addition to a stochastic quantizer. Then, we again show that a dynamic quantizer gives a better trade-off than a static quantizer.

Let $w(k) \sim \mathcal{N}(0, \sigma^2(k) I_m)$ be independent Gaussian noise, where the variance $\sigma^2(k)$ is a tuning parameter. We add $w(k)$ to the input channel of~\eqref{system_1}. Namely, we consider the following system model instead of~\eqref{system_1}: 
\begin{align}
\label{Gaussian_mechanism}
     x(k+1) = Ax(k) + B(u(k) + w(k)),
\end{align}
where the input $u$ is designed as in the previous sections. From~\eqref{eq:mech}, the new mechanism is
\begin{align}\label{eq:mech2}
&V_k = \cQ_v(O_k x_0 + N_k (W_k + U_k)),\\
&\quad W_k := \begin{bmatrix} w^{\top}(0) & w^{\top}(1) &\cdots  & w^{\top}(k)\end{bmatrix}^{\top}.
\nonumber
\end{align}

In fact, applying input Gaussian noise first several steps in addition to the use of a stochastic quantizer ensures the initial state privacy under a desired differential privacy level, stated below.

\begin{secthm}
\label{thm:input_noise}
    Suppose that a pair $(A, B)$ in~\eqref{Gaussian_mechanism} is controllable. Define $M := [A^{n^*-1}B, \cdots, AB, B]$ and $\Delta := MM^{\top}$, where $n^*$ is the smallest integer such that $\Delta$ is non-singular. Furthermore, assume $CA^{k}B = 0$ for $0\leq k \leq n^{*}-2$. Then, a mechanism~\eqref{eq:mech2} with a dynamic stochastic quantizer~\eqref{dynamic} is $(\varepsilon, \delta)$ differentially private for $\operatorname{Adj}_1^\zeta$ for any $k \in \bZ_+$ if the parameters are designed as follows:
    \begin{enumerate}
    \item $\varepsilon \ge \varepsilon_0$ and $\delta \ge \delta_1+\delta_2$;
        \item $d(k)$ satisfies~\eqref{constraint_quantizer} and
        \begin{align*}
        d(0) \geq \sum_{t = 0}^{n^{*} - 1}\frac{\beta |C|_1 \lambda^{t}\zeta}{\delta_1 q^{t}};
        \end{align*}
        \item  $\sigma(k)$ satisfies
         \begin{align*}
        \sigma(k) = \begin{cases} \displaystyle \sigma \geq \frac{| \Delta^{-\frac{1}{2}}A^{n^*}|_2 \zeta}{\kappa_{\varepsilon_0}^{-1}(\delta_2)}, & 0 \leq k < n^{*} \\ 0, & k \ge n^{*} 
        \end{cases}
        \end{align*}
    \end{enumerate}
    where $\kappa_{\varepsilon_0}^{-1}(\delta_2)$ is the inverse function of 
\begin{align*}
&\kappa(\varepsilon_0, \delta_2):=\alpha \left(\frac{\delta_2}{2}-\frac{\varepsilon_0}{\delta_2}\right)-e^{\varepsilon_0} \alpha \left(-\frac{\delta_2}{2}-\frac{\varepsilon_0}{\delta_2}\right)\\
&\quad \alpha(a):=\frac{1}{\sqrt{2 \pi}} \int_{-\infty}^a e^{-\frac{w^2}{2}} d w
\end{align*}
with respect to $\delta_2$.
\end{secthm}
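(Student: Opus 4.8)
The plan is to exploit the two structural hypotheses to split the observation $V_k$ into a part governed purely by the stochastic quantizer and a part governed by the input Gaussian noise, and then to combine their privacy guarantees by composition. First I would observe that the assumption $CA^{t}B = 0$ for $0 \le t \le n^{*}-2$ forces the first $n^{*}$ outputs to be noise- and input-free: expanding $y(t) = CA^{t}x_0 + \sum_{j=0}^{t-1} CA^{t-1-j}B(u(j)+w(j))$ shows that for $t \le n^{*}-1$ every coefficient $CA^{t-1-j}B$ vanishes, so $y(t) = CA^{t}x_0$ for $t = 0,\dots,n^{*}-1$. Hence the sub-vector $(v(0),\dots,v(n^{*}-1))$ is a randomized function of $x_0$ alone, produced by the dynamic quantizer~\eqref{dynamic} acting on $Cx_0,\dots,CA^{n^{*}-1}x_0$. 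Applying the argument of Theorem~\ref{thm:dyn_quantizer_arbi} but truncating the sum at $t=n^{*}-1$ and replacing $\delta$ by $\delta_1$, the lower bound on $d(0)$ in item 2) guarantees that this first mechanism $M_1:x_0 \mapsto (v(0),\dots,v(n^{*}-1))$ is $(0,\delta_1)$ differentially private for $\operatorname{Adj}_1^\zeta$.

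Next I would handle the tail $(v(n^{*}),v(n^{*}+1),\dots)$ through the input noise. Propagating the dynamics to time $n^{*}$ gives $x(n^{*}) = A^{n^{*}}x_0 + M(U_{n^{*}-1}+W_{n^{*}-1})$, and since $w(j)\sim\mathcal{N}(0,\sigma^2 I_m)$ are independent, $MW_{n^{*}-1}\sim\mathcal{N}(0,\sigma^2\Delta)$ with $\Delta$ nonsingular by the choice of $n^{*}$. Because $w(k)=0$ for $k\ge n^{*}$, every later output $y(t)=CA^{t-n^{*}}x(n^{*}) + \sum_{j=n^{*}}^{t-1}CA^{t-1-j}Bu(j)$, $t\ge n^{*}$, is a deterministic affine function of the single Gaussian vector $x(n^{*})$ once the eavesdropped input $U$ is fixed, and the quantizer merely post-processes these outputs with fresh independent randomness. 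Thus $M_2:x_0\mapsto(v(n^{*}),v(n^{*}+1),\dots)$ is a post-processing of the Gaussian observation $x(n^{*})\sim\mathcal{N}(A^{n^{*}}x_0+MU_{n^{*}-1},\ \sigma^2\Delta)$. Whitening by $\Delta^{-1/2}/\sigma$ reduces this to a standard Gaussian mechanism whose $\ell_2$-sensitivity over $\operatorname{Adj}_1^\zeta$ is $\|\Delta^{-1/2}A^{n^{*}}(x_0-x_0')\|_2 \le |\Delta^{-1/2}A^{n^{*}}|_2\,\zeta$ (using $\|\cdot\|_2\le\|\cdot\|_1\le\zeta$). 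The analytic Gaussian-mechanism characterization, encoded here by $\kappa(\varepsilon_0,\cdot)$ with $\alpha$ the standard normal CDF, makes this mechanism $(\varepsilon_0,\delta_2)$ differentially private precisely when the normalized sensitivity $s/\sigma$ does not exceed $\kappa_{\varepsilon_0}^{-1}(\delta_2)$; monotonicity of $\kappa(\varepsilon_0,\cdot)$ turns this into the lower bound on $\sigma$ in item 3). By post-processing invariance, $M_2$ is therefore $(\varepsilon_0,\delta_2)$ differentially private for every horizon $k$, including $k\to\infty$, which is exactly what lets the unstable case survive.

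Finally I would compose. The randomness feeding $M_1$ (quantizer noise at times $0,\dots,n^{*}-1$) is independent of the randomness feeding $M_2$ (the Gaussian noise $W_{n^{*}-1}$ together with quantizer noise at times $\ge n^{*}$), so $M_1$ and $M_2$ are conditionally independent given $x_0$. Sequential composition of differential privacy then yields that the full release $V_k=(M_1,M_2)$ is $(0+\varepsilon_0,\ \delta_1+\delta_2)$ differentially private, giving $(\varepsilon,\delta)$ privacy whenever $\varepsilon\ge\varepsilon_0$ and $\delta\ge\delta_1+\delta_2$, i.e.\ item 1).

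I expect the main obstacle to be the rigorous bookkeeping of this decomposition and composition: one must verify that conditioning on $x(n^{*})$ genuinely decouples the entire tail from $x_0$, that the quantizer's post-processing does not interfere with the Gaussian guarantee, and that the two randomness sources are independent so the budgets add. The supporting quantitative step — identifying the exact analytic Gaussian-mechanism condition through the whitened covariance $\sigma^2\Delta$ and matching it to the $\kappa$/$\alpha$ functions and the stated $\sigma$ bound — is the most delicate part of the calculation.
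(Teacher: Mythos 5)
Your proposal matches the paper's own proof in all essentials: the same split at time $n^{*}$ using $CA^{t}B=0$ so that $(v(0),\dots,v(n^{*}-1))$ depends on $x_0$ alone and is handled by Theorem~\ref{thm:dyn_quantizer_arbi} with budget $\delta_1$, the same whitening $\frac{1}{\sigma}\Delta^{-1/2}x(n^{*})$ reducing to the Gaussian mechanism lemma with the $\kappa_{\varepsilon_0}^{-1}(\delta_2)$ condition, the same post-processing invariance for the tail $(v(n^{*}),\dots)$, and the same additive composition of the two independent randomness sources (which the paper carries out by an explicit chain of inequalities on product events rather than by citing a composition theorem). The approach and all quantitative steps are correct and essentially identical to the paper's Appendix~\ref{app4}.
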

\begin{proof}
The proof is in Appendix~\ref{app4}.
\end{proof}

\begin{secrem}\label{rem:satic}
Theorem~\ref{thm:input_noise} can be applied to a static stochastic quantizer~\eqref{static} by selecting $q = 1$.
\red
\end{secrem}

Since Gaussian noise is added up to $n^*-1$ time instant, the control performance~\eqref{J} is determined by a quntization step. Namely, its upper bound is estimated as~\eqref{eq:Jd*} by using the terminal quanitization step $d^*$ that can be made arbitrary small. Thus, again selecting $d(0)$ large and $d^*$ small improves a trade-off between privacy and control performances, where they are the same for a static stochastic quantizer. This implies that a dynamic quantizer gives a better trade-off performance.





\section{Simulations}\label{sec:sim}
In this section, we illustrate proposed privacy-preserving methods through a motion control problem, where the local system is an autonomous car that receives control inputs from a service provider. The initial state can represent the car owner's home address and the initial speed; the owner's home address is private information. 

The system parameters in~\eqref{eq:sys} and \eqref{reference_sys} are given as follows:
\begin{align*}
    &A = \begin{bmatrix}
    1 & 0 & \tau & 0\\ 0 & 1 & 0 & \tau \\ 0& 0 & 0 & 0 \\ 0& 0 & 0 & 0
\end{bmatrix}, \;
B = \begin{bmatrix}
    0 & 0 \\ 0 & 0 \\ 1 & 0 \\ 0 & 1
\end{bmatrix}, \;
C = \begin{bmatrix}
    1 & 0 & 0 & 0 \\ 0 & 1 & 0 & 0
\end{bmatrix},  \\
&\tau = 0.1, \; H_p = C, \; A_r = I_2, \; H_r = I_2,\\
&x_r(t)  \equiv \begin{bmatrix}
    10 & 10
\end{bmatrix}^{\top}.
\end{align*}
The fusion center is designed as follows:
\begin{align*}
K_x &= \begin{bmatrix}
    -1 & 0 & -1 & 0 \\
    0 & -1 & 0 & -1
\end{bmatrix}, \;
K_r = \begin{bmatrix}
    1 & 0 \\ 0 & 1
\end{bmatrix},\\
L &= 
\begin{bmatrix}
    -0.7238 & 0 & -0.0020 & 0 \\ 0 & -0.7238 & 0 & -0.0020
\end{bmatrix}^{\top}.
\end{align*}
To estimate the upper bound on the control performance index $J$ in~\eqref{eq:Jub} for $Q=I_2$, we compute
\begin{align*}
{\rm trace}(Z) = 3.3135, \quad {\rm trace}(H^{\top}_p Q H_p) = 2.
\end{align*}

We apply Theorem \ref{thm:input_noise} to design a privacy-preserving mechanism.
The assumptions in Theorem \ref{thm:input_noise} hold because a pair $(A, B)$ is controllable, $n^* = 2$, and $CB =0$.
First, we design a static stochastic quantizer~\eqref{static} by selecting $q=1$; recall Remark~\ref{rem:satic}. One can calculate $\beta =1$ and $\lambda = 1$ in~\eqref{eq:sys_norm}. Then, $d(0) = 4$ and $\sigma^2 = 5$ satisfy the conditions in Theorem \ref{thm:input_noise}, where $\varepsilon_0 = 0.3$, $\delta_1 = 0.05$, and $\delta_2=0.0461$. Namely, the mechanism~\eqref{eq:mech2} is $(\varepsilon, \delta)$ differentially private for $\varepsilon = 0.3$ and $\delta = 0.0961$. Also, the upper bound on the control performance is $J \leq 53.0$.

Next, we design a dynamic stochastic quantizer~\eqref{dynamic} by selecting $d(0) = 10$, $d^* =0$, and $q= 0.99$. Then, $\delta_1 = 0.0199$, and thus the mechanism~\eqref{eq:mech2} is $(\varepsilon, \delta)$ differentially private for $\varepsilon = 0.3$ and $\delta = 0.0660$. Thus, the dynamic quantizer provides a more private mechanism than the static one. Also, the control performance is $J = 0$, which is smaller than that in static one. This shows that the dynamic quantizer has better performances for both control and privacy.


Figures~\ref{fig:sto_x_1} and~\ref{fig:sto_x_2} show the tracking outputs for $x_0 = 0$ when static and dynamic stochastic quantizers are implemented. It is observed that the dynamic quantizer has better control performances while making the mechanism more private than the static one. Therefore, the dynamic quantizer improves both privacy and control performances.

\if0
\begin{figure}[t]
    \centering    \includegraphics[width = 0.5\textwidth]{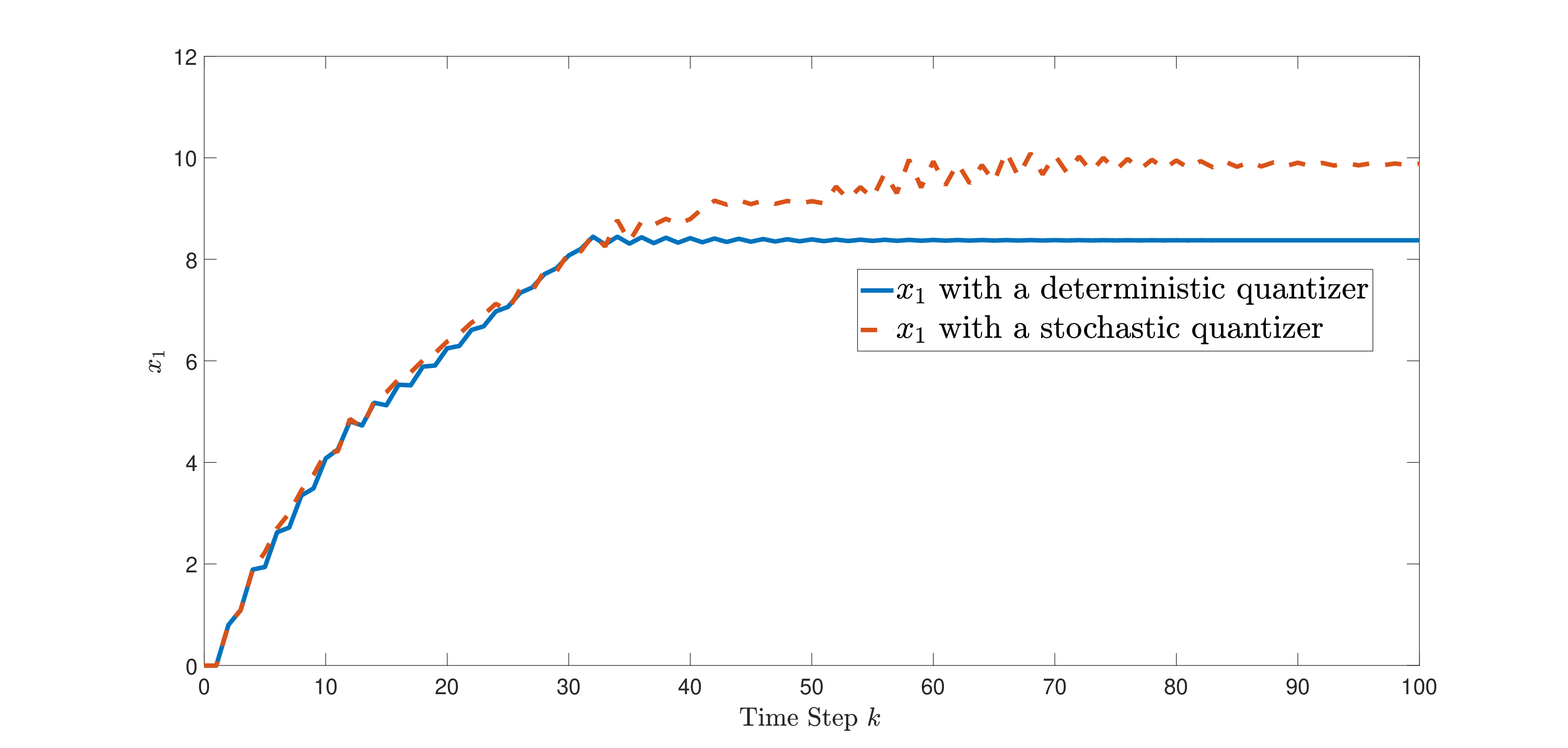}
    \caption{Static deterministic versus stochastic quantizers: tracking output $y_1 = x_1$}
    \label{fig:x_1}
\end{figure}

\begin{figure}[t]
    \centering    \includegraphics[width = 0.5\textwidth]{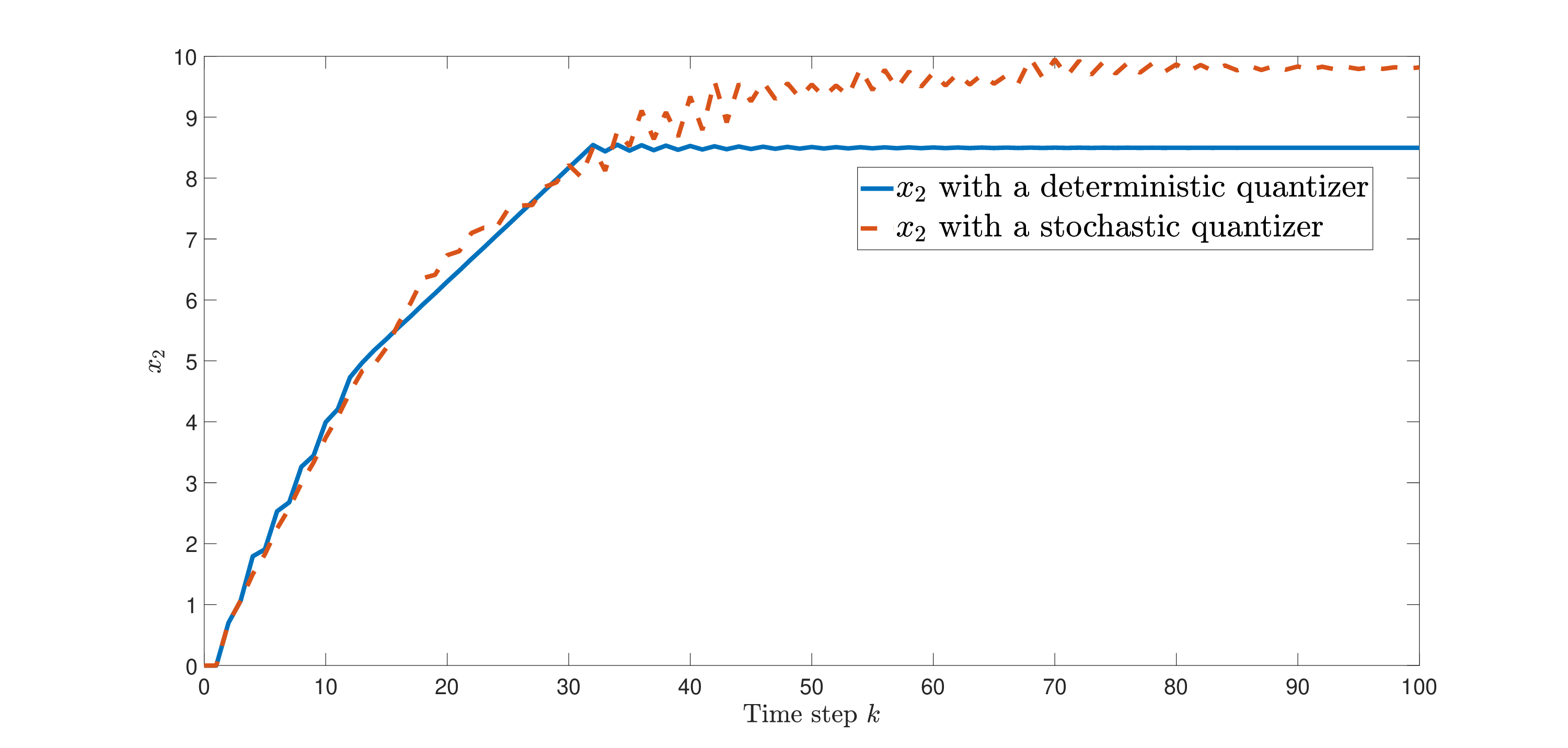}
    \caption{Static deterministic versus stochastic quantizers: tracking output $y_2 = x_2$}
    \label{fig:x_2}
\end{figure}
\fi

\begin{figure}[t]
    \centering    \includegraphics[width = 0.5\textwidth]{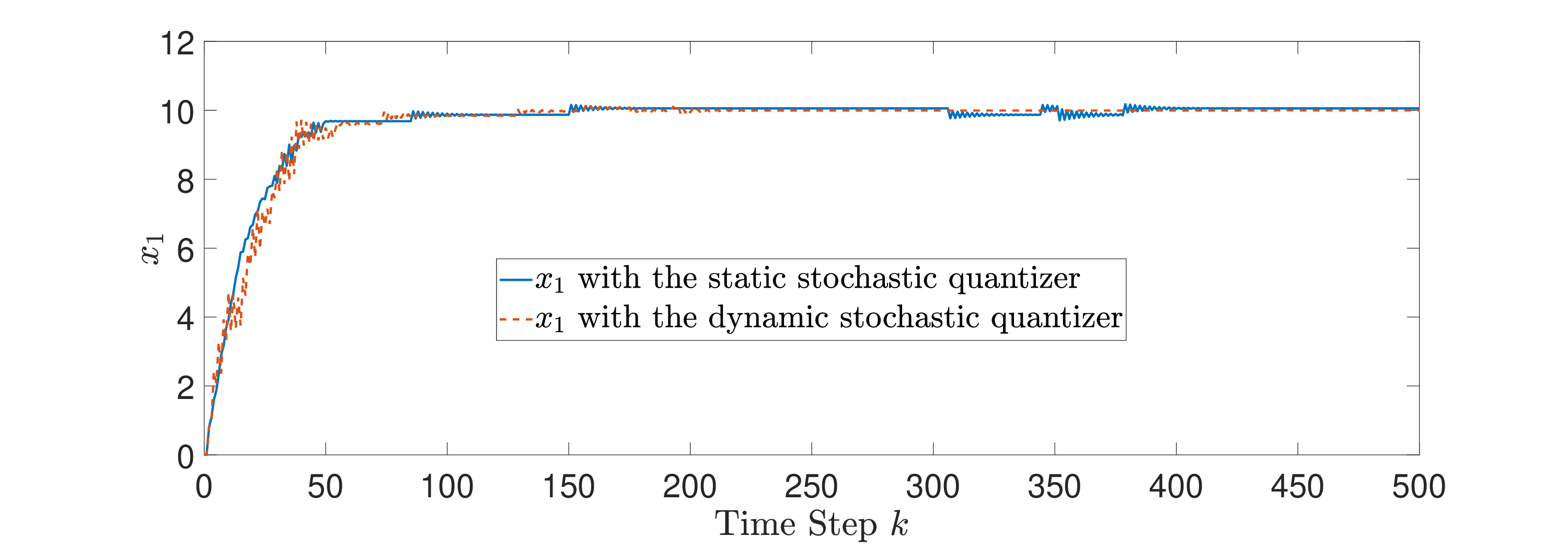}
    \caption{Dynamic versus static stochastic quantizers: tracking output $y_1 = x_1$ }
    \label{fig:sto_x_1}

    \centering    \includegraphics[width = 0.5\textwidth]{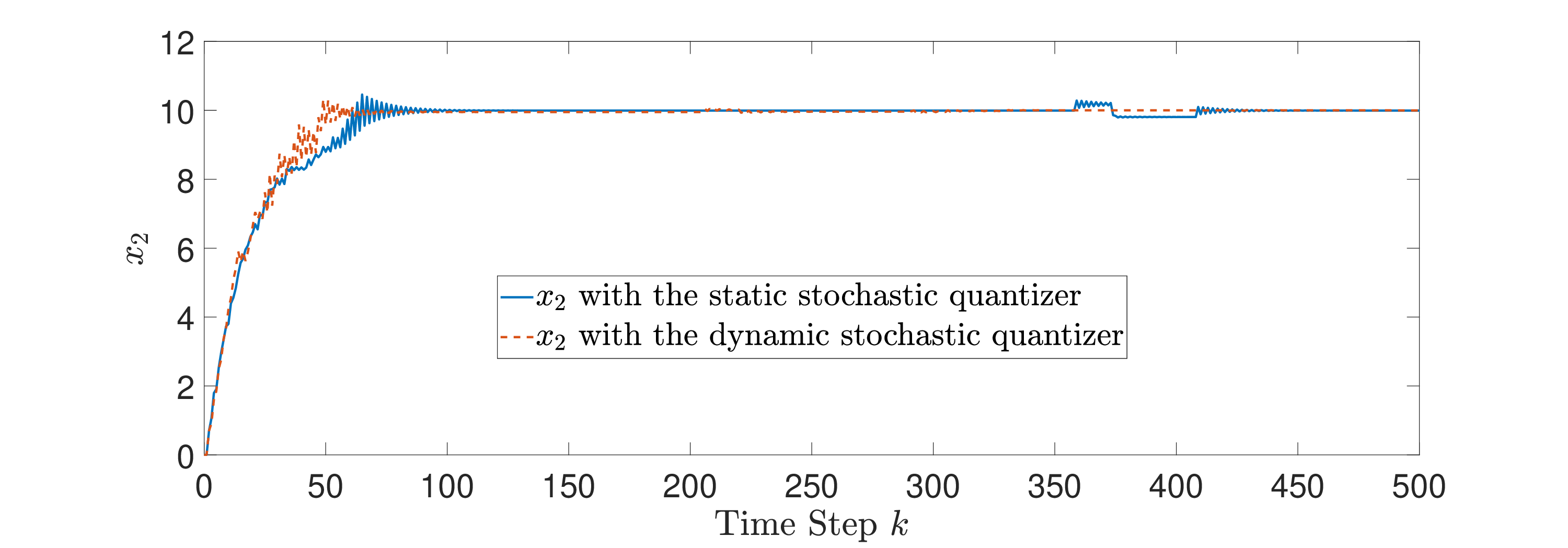}
    \caption{Dynamic versus static stochastic quantizers: tracking output $y_2 = x_2$ }
    \label{fig:sto_x_2}
\end{figure}



\section{Conclusions}\label{sec:con}
In this paper, we have studied the role of stochastic quantizers for privacy protection in the context of networked tracking control. First, focusing on a static stochastic quantizer, we have derived a sufficient condition for a quantization step to achieve $(0,\delta)$ differential privacy. Then, we have estimated an upper bound on the tracking error caused by the quantization. These privacy and control performance are in the relation of a trade-off, determined by the quantization step. To improve this trade-off, we have introduced a dynamic stochastic quantizer and illustrated the improvement by selecting the initial and terminal quantization steps appropriately when a system matrix is Schur stable. To deal with the unstable case, we have further provided a new mechanism by the combination of input Gaussian noise and stochastic quantizers and shown that a dynamic quantizer again achieves a better trade-off performance.



\appendices

\section{Proof of Theorem~\ref{thm:dp}}\label{app1}

Before proving the theorem, we first provide the following auxiliary lemma.

\begin{seclem}\label{lem1}
Consider a stochastic quantizer~\eqref{eq:Q} and~\eqref{static} with scalar $y$. If $s \leq d$, it follows that
\begin{align*}
\sup _{|y-y'| \leq \ s}\left| \bP \left(v \in \cS \mid y\right) - \bP \left(v'\in \cS \mid y'\right)  \right| \leq \frac{s}{d}, \quad \forall S \in \cF
\end{align*}
for any $(y, y') \in \bR \times \bR$.
\end{seclem}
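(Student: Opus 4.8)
The plan is to recognize that the supremum over $\cS \in \cF$ on the left-hand side is exactly the total variation distance between the conditional law of $v$ given $y$ and that of $v'$ given $y'$, and then to bound this distance by exhibiting an explicit coupling of the two quantized outputs. First I would record the structure of the quantizer in~\eqref{static}: writing the scalar input as $y = nd + z$ with $z \in (0,d]$ and $n \in \bZ$, the output satisfies $v = nd$ with probability $1 - z/d$ and $v = (n+1)d$ with probability $z/d$. Both conditional laws are therefore discrete and supported on the grid $d\bZ$, so that $\sup_{\cS}\left|\bP(v \in \cS \mid y) - \bP(v' \in \cS \mid y')\right|$ coincides with $\mathrm{TV}\bigl(\bP(\cdot \mid y),\, \bP(\cdot \mid y')\bigr)$.

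Next I would realize both quantizers on a common probability space through a shared dither. Introducing $\Theta \sim \mathrm{Unif}[0,d)$ and setting, for input $y = nd + z$, the rule $v = (n+1)d$ when $\Theta < z$ and $v = nd$ otherwise, reproduces the law in~\eqref{static}. Feeding $y$ and $y'$ through the \emph{same} $\Theta$ produces a coupling $(v, v')$, and by the standard coupling bound for total variation it suffices to prove $\bP(v \neq v') \leq s/d$ whenever $|y - y'| \leq s \leq d$.

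The remaining step is a short case analysis on the relative location of $y$ and $y'$, which, because $|y-y'| \leq s \leq d$, must lie either in the same quantization cell or in two adjacent cells. When $y = nd + z$ and $y' = nd + z'$ share a cell, the outputs disagree exactly when $\Theta$ lies between $z$ and $z'$, giving $\bP(v \neq v') = |z - z'|/d = |y - y'|/d \leq s/d$. The case I expect to be the main obstacle is when $y$ and $y'$ straddle a grid boundary, since three grid points are then in play: here I would write $y = nd + z$ and $y' = (n+1)d + z'$ with $z,z' \in (0,d]$, and use $|y - y'| = d + z' - z \leq s$ to obtain $z - z' \geq d - s \geq 0$. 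The two outputs can agree only on the common value $(n+1)d$, which happens precisely when $z' \leq \Theta < z$, so $\bP(v = v') = (z - z')/d \geq (d-s)/d$ and hence $\bP(v \neq v') \leq s/d$. Combining the two cases yields the bound uniformly over $(y,y')$ with $|y - y'| \leq s$ and over all $\cS \in \cF$, which is the assertion of the lemma.
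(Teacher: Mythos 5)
Your proof is correct, but it follows a genuinely different route from the paper's. The paper proves the lemma by brute force: it splits into the same two geometric cases (same cell, adjacent cells) and then, within each case, enumerates \emph{every} subset $\cS$ of the possible output values ($4$ subsets in the first case, $8$ in the second) and bounds each probability difference by $s/d$ directly from the formula in~\eqref{static}. You instead observe that the supremum over $\cS \in \cF$ is the total variation distance between the two conditional output laws, realize both laws by a shared uniform dither $\Theta \sim \mathrm{Unif}[0,d)$, and invoke the coupling inequality $\sup_{\cS}\left|\bP(v \in \cS \mid y) - \bP(v' \in \cS \mid y')\right| \leq \bP(v \neq v')$, after which only the two-case geometric analysis on the disagreement event remains; your computations in both cases (probability $|y-y'|/d$ of disagreement in the same-cell case, and agreement probability $(z-z')/d \geq (d-s)/d$ in the straddling case, using $z - z' = d - |y-y'| \geq d - s \geq 0$) are accurate. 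What your approach buys is brevity and structure: the subset enumeration disappears entirely, the role of the hypothesis $s \leq d$ becomes transparent (it is exactly what confines $y, y'$ to the same or adjacent cells and keeps the agreement probability nonnegative), and the argument extends immediately to any quantizer realizable by dithering. What the paper's approach buys is self-containedness: it uses nothing beyond the definition of~\eqref{static} and elementary inequalities, with no appeal to total variation or coupling facts.
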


\begin{proof}
It suffices to consider the following two cases
\begin{itemize}
    \item [] Case 1: $y, y' \in (nd, (n+1)d]$;
    \item [] Case 2: $y \in (nd, (n+1)d]$, $y' \in ((n+1)d, (n+2)d]$.
\end{itemize}

(Case 1) 
In this case, we have
\begin{align*}
\cF = \{\emptyset, \{nd\}, \{(n+1)d\}, \{nd, (n+1)d\}\}.
\end{align*}
First, for $S = \emptyset$ and $S = \{nd, (n+1)d\}$, we obtain
\begin{align*}
\bP \left(v \in \cS \mid y\right) - \bP \left(v'\in \cS \mid y'\right) = 0.
\end{align*}

Next, for $\cS = \{nd\}$, it follows that
\begin{align*}
&\sup _{|y-y'| \leq s}\left| \bP \left(v = nd \mid y\right) - \bP \left(v'=nd \mid y'\right)  \right| \\
&= \sup _{|y-y'| \leq s}\left|  1-\frac{y-nd}{d} - \left( 1-\frac{y'-nd}{d} \right) \right| \\
&= \sup _{|y-y'| \leq s}\left| \frac{y - y'}{d} \right|  
\leq  \frac{s}{d}.
\end{align*}

Finally, for $\cS = \{(n+1)d\}$, it holds that
\begin{align*}
&\sup _{|y-y'| \leq s}\bigg| \bP \left(v=(n+1)d \mid y\right) - \bP \left(v'=(n+1)d \mid y'\right)  \bigg| \\
& = \sup _{|y-y'| \leq s}\left|  \frac{y-nd}{d} - \frac{y'-nd}{d}\right|
 = \sup _{|y-y'| \leq s}\left| \frac{y - y'}{d} \right| 
\leq  \frac{s}{d}.
\end{align*}

(Case 2)
In this case, we have
\begin{align*}
\cF &= \{\emptyset, \{nd\}, \{(n+1)d\}, \{(n+2)d\},  \\
&\qquad \{nd, (n+1)d\}, \{nd, (n+2)d\},  \\
&\qquad \{(n+1)d, (n+2)d\}, \{nd, (n+1)d, (n+2)d\}\}.
\end{align*}
First, for $S = \emptyset$ and $S = \{nd, (n+1)d, (n+2)d\}$, we obtain
\begin{align*}
\bP \left(v \in \cS \mid y\right) - \bP \left(v'\in \cS \mid y'\right) = 0.
\end{align*}

Second, for $\cS = \{nd\}$, it follows that
\begin{align*}
&\sup _{|y-y'| \leq s}\bigg| \bP \left(v=nd \mid y\right) - \bP \left(v'=nd \mid y'\right)  \bigg| \\
&= \sup _{|y-y'| \leq s}\left|  1 - \frac{y-nd}{d} \right| 
= \sup _{|y-y'| \leq s}\left| \frac{y-(n+1)d}{d} \right|.
\end{align*}
From $y \in (nd, (n+1)d]$, we have 
\begin{align*}
0 \ge y - (n+1)d. 
\end{align*}
From $y' \in ((n+1)d, (n+2)d]$ when $0 < y' - y \le s$, we obtain
\begin{align*}
0 \ge y - (n+1)d \ge  - s.
\end{align*}
Thus, it follows when $y' - y \le s$ that
\begin{align*}
\sup _{|y-y'| \leq s}\left| \frac{y-(n+1)d}{d} \right| \le \frac{s}{d}.
\end{align*}
The case where $\cS = \{(n+2)d\}$ can be shown similarly.

Third, for $\cS = \{(n+1)d\}$, we have
\begin{align*}
&\sup _{|y - y'| \leq s}\bigg| \bP \left(v=(n+1)d \mid y\right)  - \bP \left(v'=(n+1)d \mid y'\right)  \bigg| \\
&= \sup _{|y-y'| \leq s}\left|  \frac{y-nd}{d} - \left(1- \frac{y'-(n+1)d}{d}\right)\right| \\
&= \sup _{|y-y'| \leq s}\left| \frac{y + y' - (2n+2)d }{d} \right|.
\end{align*}
Since $y \in (nd, (n+1)d]$ and $|y-y'| \leq s$, it follows that 
\begin{align*}
    y+y^{\prime} \leq 2y+ |y-y^{\prime}|\leq 2(n+2)d + s.
\end{align*}
Also, since $y' \in ((n+1)d, (n+2)d]$ , we obtain
\begin{align*}
    (2n+2)d - s < 2y^{\prime} - |y -y^{\prime}| \leq y + y'
\end{align*}
Therefore, we have
\begin{align*}
 \sup _{|y-y'| \leq s}\left| \frac{y + y' - (2n+2)d }{d} \right| \leq \frac{s}{d}.
\end{align*}

Fourth, for $\cS = \{nd, (n+1)d\}$, we have
\begin{align*}
    & \sup _{|y-y'| \leq s}\bigg| \bP \left(v \in \cS \mid y\right) - \bP \left(v'\in \cS \mid y'\right)  \bigg| \\
    &=  \sup _{|y-y'| \leq s}\bigg| 1 - \bP \left(v'=(n+1)d \mid y'\right)  \bigg| \leq \frac{s}{d}.
\end{align*}
The case where $\cS = \{(n+1)d, (n+2)d\}$ can be shown similarly.

Finally, for $\cS = \{nd, (n+2)d\}$, we have 
\begin{align*}
&\sup _{|y-y'| \leq s}\bigg| \bP \left(v \in \cS \mid y\right) - \bP \left(v' \in \cS \mid y'\right)  \bigg| \\
&=\sup _{|y-y'| \leq s}\bigg| \bP \left(v=nd \mid y\right) - \bP \left(v'=(n+2)d \mid y'\right)  \bigg| \\
&= \sup _{|y-y'| \leq s}\left|  1 - \frac{y-nd}{d} - \frac{y'-(n+1)d}{d} \right| \\
&= \sup _{|y-y'| \leq s}\left| \frac{y+y'-2(n+1)d}{d} \right|
\le  \frac{s}{d}.
\end{align*}
That completes the proof.
\end{proof}


Now, we are ready to prove Theorem~\ref{thm:dp}.

{\it Proof of Theorem~\ref{thm:dp}:}
Note that $\cS$ can be described as $\cS = \cS_0 \times \cdots \times \cS_{k}$, where $\cS_t$ denotes the possible output at $t$-th time instant. Since the output of the stochastic quantizer~\eqref{static} is i.i.d., we have
\begin{align}\label{pf1:dp}
\bP \left(V_k \in \cS \mid x_0, U_k\right) = \prod_{t = 0}^{k} \bP \left(v(t)\in \cS_t \mid x_0, U_k\right).
\end{align}
Since $U_k$ is public information, and $x_0$ is needed to be private, we rewrite it, for the sake of rotational simplicity, by
\begin{align*}
\bP \left(v(t)\in \cS_t \mid x_0\right) := \bP \left(v(t)\in \cS_t \mid x_0, U_k\right).
\end{align*}
Also, for the sake of rotational simplicity, define 
\begin{align}\label{pf1.5:dp}
s := \sum_{t=0}^{k} \beta |C|_1 \lambda^{t} \zeta.
\end{align}

It follows from~\eqref{pf1:dp} that
\begin{align}\label{pf2:dp}
&\sup _{|x_0-x'_0|_1 \leq \zeta }\left|\bP \left(V_k \in \cS \mid x_0\right) - \bP \left(V'_k\in \cS \mid x'_0\right)\right| \nonumber\\
&\leq \sup _{|y_k-y'_k|_1 \leq s }\left|\bP \left(V_k \in \cS \mid x_0\right)-\bP \left(V'_k\in \cS \mid x'_0\right)\right| \nonumber\\
&=  \sup _{|y_k-y'_k|_1 \leq s } \Biggl| \prod_{t = 0}^{k} \bP \left(v(t) \in \cS_t \mid x_0\right) \nonumber\\
&\hspace{30mm} - \prod_{t = 0}^{k} \bP \left(v'(t) \in \cS_t \mid x'_0\right) \Biggr|
\end{align}
The most right hand side can be rearranged as
\begin{align*}
&\prod_{t = 0}^{k} \bP \left(v(t) \in \cS_t \mid x_0\right)  - \prod_{t = 0}^{k} \bP \left(v'(t) \in \cS_t \mid x'_0\right)\\
&= \bP \left(v(0) \in \cS_0 \mid x_0\right) \prod_{t = 1}^{k} \bP \left(v(t)\in \cS_t \mid x_0\right)\\
&\qquad -  \bP \left(v'(0)\in \cS_0 \mid x'_0\right) \prod_{t = 1}^{k} \bP \left(v'(t)\in \cS_t \mid x'_0\right)\\
&= ( \bP \left(v(0) \in \cS_0 \mid x_0\right) - \bP \left(v'(0)\in \cS_0 \mid x'_0\right) )\\
&\qquad \times\prod_{t = 1}^{k} \bP \left(v(t) \in \cS_t \mid x_0\right) \\
&\quad + \bP \left(v'(0)\in \cS_0 \mid x'_0\right) \\
&\qquad   \times \left(\prod_{t = 1}^{k} \bP \left(v(t)\in \cS_t \mid x_0\right) - \prod_{t =1}^{k} \bP \left(v'(t)\in \cS_t \mid x'_0\right) \right).
\end{align*}
Taking the absolute value leads to
\begin{align*}
&\left| \prod_{t = 0}^{k} \bP \left(v(t) \in \cS_t \mid x_0\right)  - \prod_{t = 0}^{k} \bP \left(v'(t) \in \cS_t \mid x'_0\right) \right|\\
&\le \left| \bP \left(v(0) \in \cS_0 \mid x_0\right) - \bP \left(v'(0)\in \cS_0 \mid x'_0\right) \right|\\
&\qquad \times \left| \prod_{t = 1}^{k} \bP \left(v(t) \in \cS_t \mid x_0\right) \right| \\
&\quad + \left| \bP \left(v'(0)\in \cS_0 \mid x'_0\right) \right| \\
&\qquad   \times \left|\prod_{t = 1}^{k} \bP \left(v(t)\in \cS_t \mid x_0\right) - \prod_{t =1}^{k} \bP \left(v'(t)\in \cS_t \mid x'_0\right) \right|\\
&\le \left| \bP \left(v(0) \in \cS_0 \mid x_0\right) - \bP \left(v'(0)\in \cS_0 \mid x'_0\right) \right|\\
&\quad + \left|\prod_{t = 1}^{k} \bP \left(v(t)\in \cS_t \mid x_0\right) - \prod_{t =1}^{k} \bP \left(v'(t)\in \cS_t \mid x'_0\right) \right|,
\end{align*}
where in the second inequality, $|\bP (\cdot)| \le 1$ is used. Repeating this for $t=1,\dots, k$ yields
\begin{align}\label{pf3:dp}
&\left| \prod_{t = 0}^{k} \bP \left(v(t) \in \cS_t \mid x_0\right)  - \prod_{t = 0}^{k} \bP \left(v'(t) \in \cS_t \mid x'_0\right) \right| \nonumber\\
&\le \sum_{t = 0}^{k} \left| \bP \left(v(t)\in \cS_t \mid x_0\right) - \bP \left(v'(t)\in \cS_t \mid x'_0\right)  \right|.
\end{align}

On the other hand, it follows that
\begin{align}\label{pf4:dp}
    & | \bP \left(v(t)\in \cS_t \mid x_0\right) - \bP \left(v'(t)\in \cS_t \mid x'_0\right)| \nonumber\\
 & \leq \sum_{i = 1}^{p} | \bP \left(v_i(t)\in \cS_{i,t} \mid x_0\right) - \bP \left(v'_i(t)\in \cS_{i,t} \mid x'_0\right) \nonumber\\
 & \leq \frac{|y(t) - y'(t)|_1}{d}
\end{align}
for any $(y(t), y'(t)) \in \bR^p \times \bR^p$ such that $|y(t) - y'(t)|_1 \le \beta |C|_1 \lambda^t \zeta < d$,
where $v_i(t)$ is the $i$-th component of $v(t)$. The last inequality follows from Lemma~\ref{lem1}.

Combining \eqref{pf1.5:dp} -- \eqref{pf4:dp} leads to
\begin{align*}
&\sup_{|x_k-x'_k|_1 \leq \zeta }\left|\bP \left(v_k\in \cS \mid x_0\right) - \bP \left(V'_k \in \cS \mid x'_0\right)\right| \\
&\leq \sup _{|y_k-y'_k|_1 \leq s } \sum_{t = 0}^{k} \frac{|y(t) - y'(t)|_1}{d} 
\le \sum_{t = 0}^{k} \frac{\beta |C|_1 \lambda^{t} \zeta}{d}
 \leq \delta.
\end{align*}
The last inequality follows from \eqref{eq1:dp}.
\red



\section{Proof of Theorem~\ref{thm:J}}\label{app2}

First, we have the following lemma for the quantization error $w_v := \mathcal{Q}_v(y) - y$. 

\begin{seclem} \label{lem:2.1}
For a stochastic quantizer $Q_v$ defined in \eqref{static},  the following three holds at each $k \in \bZ_+$:
\begin{enumerate}
\renewcommand{\labelenumi}{(\roman{enumi})}
\item $\bE [w_v(k)] = 0$;
\item $\bE [w_v(k)w_v^{\top}(k)] \preceq ( d^2/4)I$;
\item $\bE [w_v(k)w_v^{\top}(\ell)] = 0$ for any $\ell \neq k$.
\end{enumerate}
\end{seclem}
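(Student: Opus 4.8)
The plan is to reduce everything to a single scalar component at a single time and compute the conditional law of the quantization error given the signal being quantized, then lift back to the vector/time-series setting by the law of total expectation. Fix a time $k$ and one scalar component, and write the corresponding entry of $y(k)$ as $nd + z$ with $z \in (0,d]$ and $n \in \bZ$. From the definition~\eqref{static}, the scalar error $w := \mathcal{Q}_v(nd+z) - (nd+z)$ takes the value $-z$ with probability $1 - z/d$ and the value $d - z$ with probability $z/d$. A direct computation gives $\bE[w \mid z] = (-z)(1 - z/d) + (d-z)(z/d) = 0$ and $\bE[w^2 \mid z] = z^2(1 - z/d) + (d-z)^2 (z/d) = z(d-z)$, the latter maximized at $z = d/2$ with value $d^2/4$. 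These two identities are the analytic core and are elementary.

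The subtlety is that in the closed loop $y(k)$, and hence the fractional part $z$, is itself random, depending on the quantizer coin flips at earlier times. To handle this cleanly, I would introduce a filtration $\{\mathcal{F}_k\}$ where $\mathcal{F}_k$ is generated by the history up to and including $y(k)$ but excluding the fresh quantizer randomness injected at time $k$; then $y(k)$ is $\mathcal{F}_k$-measurable while the coin flip producing $\mathcal{Q}_v(y(k))$ is independent of $\mathcal{F}_k$. Conditioning on $\mathcal{F}_k$ freezes $z$, so the scalar computation above yields $\bE[w_v(k) \mid \mathcal{F}_k] = 0$ componentwise, whence (i) follows by the tower property $\bE[w_v(k)] = \bE[\bE[w_v(k)\mid \mathcal{F}_k]] = 0$.

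For (ii), the i.i.d.\ property of the quantizer across vector components makes the conditional cross-covariances vanish: for $i \neq j$, independence together with conditional zero mean gives $\bE[w_{v,i}(k) w_{v,j}(k) \mid \mathcal{F}_k] = 0$, while each diagonal entry equals $z_i(d - z_i) \le d^2/4$. Hence $\bE[w_v(k)w_v^\top(k)\mid \mathcal{F}_k] \preceq (d^2/4) I$ pointwise, and taking the outer expectation preserves the semidefinite inequality, giving (ii).

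For (iii), assume $\ell < k$ without loss of generality. Then $w_v(\ell)$ is $\mathcal{F}_k$-measurable, so pulling it out of the conditional expectation gives $\bE[w_v(k) w_v^\top(\ell) \mid \mathcal{F}_k] = \bE[w_v(k)\mid \mathcal{F}_k]\, w_v^\top(\ell) = 0$, and the tower property finishes. I expect the only genuine obstacle to be making the conditioning precise, namely choosing $\mathcal{F}_k$ so that $z$ is measurable but the time-$k$ coin flip is not; once that is set up, all three claims collapse to the elementary scalar moment computation.
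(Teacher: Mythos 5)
Your proposal is correct, and its analytic core is exactly the paper's: the two-point conditional law of the scalar error ($-z$ with probability $1-z/d$, $d-z$ with probability $z/d$), giving conditional mean $0$ and conditional second moment $z(d-z)\le d^2/4$, with cross terms killed by componentwise independence. Where you genuinely differ is in how the randomness of $y(k)$ itself is handled. The paper fixes a representative value $y_i(k)=nd+z$ and computes as if $z$ were deterministic, and it disposes of item (iii) (and, later in the proof of Theorem~\ref{thm:J}, of the correlation between $w_v(k)$ and $(\bar x(k),e_x(k))$) by asserting that the errors at distinct times are \emph{independent}. Strictly speaking that independence claim is not accurate in closed loop: the conditional distribution of $w_v(k)$ depends on the fractional part of $y(k)$, which in turn depends on past quantizer outcomes through the feedback, so $w_v(k)$ and $w_v(\ell)$ are in general dependent random variables. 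Your filtration argument is the correct repair: only the fresh coin flip at time $k$ is independent of $\mathcal{F}_k$, and what actually holds is the martingale-difference property $\bE[w_v(k)\mid\mathcal{F}_k]=0$, which via the tower property yields (i), the uncorrelatedness in (iii), and the preservation of the semidefinite bound in (ii) after taking outer expectations. So your route buys rigor where the paper is loose, at the cost of the extra bookkeeping of defining $\mathcal{F}_k$ so that $y(k)$ is measurable while the time-$k$ randomization is not; the conclusions and all quantitative bounds are identical.
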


\begin{proof}
Let $w_{v,i}(k)$ denote the $i$th element of $w_v(k)$. Note that each element is independent. Without loss of generality, we suppose $y_i(k) \in (nd, (n+1)d]$. Then, there exists $z \in [0, d)$ such that $y_i(k) = z + nd$.

(Proof of item (i)) 
It follows from~\eqref{static} that
\begin{align*}
\bE [w_{v,i}(k)] 
&= \bE [\mathcal{Q}_v(z) - (z + nd)] \\
&= \bE [\mathcal{Q}_v(z)] - (z + nd) \\
&= \left( nd \left(1 - \frac{z}{d}\right) + (n+1)d \frac{z}{d} \right)  - (z + nd)
= 0.
\end{align*}
This completes the proof.

(Proof of item (ii)) 
It follows from~\eqref{static} that
\begin{align*}
\bE [w_{v,i}^2(k)] 
&= \bE [(\mathcal{Q}_v(z) - (z + nd))^2] \\
&= \bE [(\mathcal{Q}_v(z))^2] - (z + nd)^2 \\
&= (nd)^2 \left(1 - \frac{z}{d}\right) + ((n+1)d)^2 \frac{z}{d}  - (z + nd)^2\\
&= - z^2 + d z
=- \left( z - \frac{d}{2}\right)^2 + \frac{d^2}{4}
\le \frac{d^2}{4}
\end{align*}
for all $z \in [0, d)$. Since $w_{v,i}(k)$ and $w_{v,j}(k)$, $j \neq i$ are independent, we have by item (i) that
\begin{align*}
\bE [w_{v,i}(k) w_{v,j}(k)] = \bE [w_{v,i}(k)] \bE[w_{v,j}(k)] = 0.
\end{align*}
That completes the proof.

(Proof of item (iii)) This follows  from  the fact that $w_v(k)$ and $w_v(\ell)$, $k \neq \ell$ are independent.
\end{proof}

Now, we are ready to provide the proof of Theorem~\ref{thm:J}.

({\it Proof of Theorem~\ref{thm:J}})
(Step 1)
Let $\bar{x} := \hat{x} - X x_r$. Then, it follows from Assumption~\ref{asm:4} and $e_x = \hat x -x$ that 
\begin{align*}
e_y (k) 
&= H_p x(k) - H_r x_r(k) \\
&= H_p x(k) - H_p X x_r(k) \\
&= H_p x(k) - H_p (\hat{x}(k) - \bar x (k) )  \\
&= H_p \bar x (k) - H_p e_x(k)
= H_p
\begin{bmatrix}
I \\ - I
\end{bmatrix}^\top
\begin{bmatrix}
\bar x (k)\\ e_x(k)
\end{bmatrix}.
\end{align*}
Noting that taking the expectation and trace is commutative, we have
\begin{align}\label{pf1:J}
    &\bE [e_y ^{\top}(k) Q e_y (k)] \nonumber\\
    &= 
\bE \left[
\begin{bmatrix}
\bar x (k)\\ e_x(k)
\end{bmatrix}^\top    
\begin{bmatrix}
I \\ - I
\end{bmatrix}
H_p^\top Q  H_p
\begin{bmatrix}
I \\ - I
\end{bmatrix}^\top
\begin{bmatrix}
\bar x (k)\\ e_x(k)
\end{bmatrix}\right]\nonumber\\
&={\rm trace}\left( 
\begin{bmatrix}
I \\ - I
\end{bmatrix}
H_p^\top Q  H_p
\begin{bmatrix}
I \\ - I
\end{bmatrix}^\top
\bE\left[
\begin{bmatrix}
\bar x (k)\\ e_x(k)
\end{bmatrix}
\begin{bmatrix}
\bar x (k)\\ e_x(k)
\end{bmatrix}^\top 
\right]
\right) \nonumber\\
&\le {\rm trace}\left( 
\begin{bmatrix}
I \\ - I
\end{bmatrix}
H_p^\top Q  H_p
\begin{bmatrix}
I \\ - I
\end{bmatrix}^\top\right)\nonumber\\
&\qquad 
 {\rm trace}\left( \bE\left[
\begin{bmatrix}
\bar x (k)\\ e_x(k)
\end{bmatrix}
\begin{bmatrix}
\bar x (k)\\ e_x(k)
\end{bmatrix}^\top 
\right] \right)\nonumber\\
&=2 {\rm trace}(H_p^\top Q  H_p) {\rm trace}\left( \bE\left[
\begin{bmatrix}
\bar x (k)\\ e_x(k)
\end{bmatrix}
\begin{bmatrix}
\bar x (k)\\ e_x(k)
\end{bmatrix}^\top 
\right] \right),
\end{align}
where in the inequality, ${\rm trace}(AB) \le {\rm trace}(A) {\rm trace}(B)$ for $A, B \succeq 0$ with the same size is used .

(Step 2)
From~\eqref{eq:sys},~\eqref{reference_sys}, and Assumption~\ref{asm:4}, we have
\begin{align*}
\bar{x} (k+1) &= \hat{x}(k+1) - X x_r(k+1)\\
&= (A+BK_x)\hat{x}(k) + B K_r x_r(k) \\
&\quad - XA_r x_r(k) +  L (Cx(k)+Ce_x(k)-v(k))\\
&= (A+BK_x)\hat{x}(k) + B(U- K_x X) x_r(k) \\
&\quad - XA_r x_r(k) - L w_v(k) + LCe_x(k) \\
&= (A+BK_x) \bar{x}(k) - L w_v(k) + LCe_x(k),
\end{align*}
and
\begin{align*}
    e_x(k+1) = (A+LC) e_x(k) - L w_v(k).
\end{align*}
Namely, we obtain
\begin{align*}
    \begin{bmatrix}
        \Bar{x}(k+1) \\ e_x(k+1)
    \end{bmatrix} =  \cA  
    \begin{bmatrix}
        \Bar{x}(k) \\ e_x(k)
    \end{bmatrix} - \begin{bmatrix}
        I \\ I
    \end{bmatrix} L w_v(k),
\end{align*}
where $\cA $ is defined in~\eqref{eq:cA}.

Since $w_v(k)$ is independent of $\bar{x}(k)$ and $e_x(k)$,
\begin{align}\label{pf3:J}
P(k) := \bE\left[\begin{bmatrix}
      \bar{x}(k) \\ e_x(k)
    \end{bmatrix}
    \begin{bmatrix}
      \bar{x}(k) \\ e_x(k)
    \end{bmatrix}^\top \right]
\end{align}
satisfies
\begin{align*}
P(k+1) 
= \cA  P(k) \cA ^{\top} 
+ 
\begin{bmatrix}
        I \\ I
    \end{bmatrix} L \bE[ w_v(k) w_v^{\top}(k)] L^\top \begin{bmatrix}
        I \\ I
    \end{bmatrix}^\top.
\end{align*}
From item (ii) of Lemma~\ref{lem:2.1}, i.e., $\bE[w_v(k)w_v^{\top}(k)] \leq \frac{d^2}{4} I $, we have
\begin{align*}
P(k+1) 
\preceq \cA  P(k) \cA ^{\top} 
+ \frac{d^2}{4}
\begin{bmatrix}
        I \\ I
    \end{bmatrix}
    L L^\top
    \begin{bmatrix}
        I \\ I
    \end{bmatrix}^\top,
\end{align*}
and consequently
\begin{align}\label{pf4:J}
P(k) 
\preceq
\cA ^k P(0) (\cA ^k)^\top
+\frac{d^2}{4} \sum_{\ell = 0}^{k-1} \cA ^\ell \begin{bmatrix}
        I \\ I
    \end{bmatrix}
    L L^\top
    \begin{bmatrix}
        I \\ I
    \end{bmatrix}^\top (\cA ^\ell)^\top.
\end{align}

(Step 3)
Recall that $A+BK_x$ and $A + LC$ are Schur stable. Thus, the Lyapunov equation~\eqref{eq:Lya} has the symmetric and positive semi-definite solution $Z$:
\begin{align*}
Z = \sum_{\ell = 0}^\infty \cA ^\ell \begin{bmatrix}
        I \\ I
    \end{bmatrix}
    L L^\top
    \begin{bmatrix}
        I \\ I
    \end{bmatrix}^\top (\cA ^\ell)^\top.
\end{align*}
Thus, $P(k)$ in~\eqref{pf4:J} can be upper bounded as
\begin{align*}
P(k) \preceq \cA ^k P(0) (\cA ^k)^\top +\frac{d^2}{4} Z,
\quad \forall k \in \bZ_+,
\end{align*}
and thus
\begin{align}\label{pf6:J}
\lim_{k \to \infty} P(k) 
\preceq 
\lim_{k \to \infty} \cA ^k P(0) (\cA ^k)^\top + \frac{d^2}{4} Z
=\frac{d^2}{4} Z.
\end{align}

Finally, it follows from~\eqref{pf1:J},~\eqref{pf3:J}, and~\eqref{pf6:J} that
\begin{align*}
\lim_{k \to \infty} \bE [e_y ^{\top}(k) Q e_y (k)] \le \frac{d^2}{2} {\rm trace}(H_p^\top Q  H_p) {\rm trace} (Z).
\end{align*}
From~\eqref{J}, this is nothing but~\eqref{eq:Jub}.
\QED



\section{Proof of Theorem~\ref{thm:dyn_quantizer_arbi}}\label{app3}
Combining \eqref{pf1.5:dp} -- \eqref{pf3:dp} leads to
\begin{align*}
&\sup_{|x_0-x'_0|_1 \leq \zeta }\left|\bP \left(V_k \in \cS \mid x_0\right) - \bP \left(V'_k\in \cS \mid x'_0\right)\right| \\
&\leq \sum_{t = 0}^{k} \sup _{|y(t)-y(t)^{\prime}|_1 \leq \beta |C|_1 \lambda^{t} \zeta }  | \bP \left(v(t)\in \cS_t \mid x_0\right) \\
&\hspace{40mm}- \bP \left(v'(t)\in \cS_t \mid x'_0\right) |.
\end{align*}
Following similar proceedure as~\eqref{pf4:dp} with~\eqref{constraint_quantizer}, we have
\begin{align*}
 &\sup _{|x_0-x'_0|_1 \le \zeta }\left|\bP \left(V_k \in \cS \mid x_0\right) - \bP \left(V'_k\in \cS \mid x'_0\right)\right| \\
 &\leq  \sum_{t = 0}^{k}     \frac{\beta |C|_1 \lambda^{t} \zeta}{d^{*} + (d(0)-d^{*})q^{t}} \\
 &\leq  \sum_{t = 0}^{k}     \frac{\beta |C|_1 \lambda^{t} \zeta}{\max \{ d^{*}, d(0)q^{t}\}} \\
  &\leq  \sum_{t = 0}^{k}    \min \left\{ \frac{\beta |C|_1 \lambda^{t} \zeta}{ d(0)q^{t}}, \frac{\beta |C|_1 \lambda^{t} \zeta}{ d^*} \right\}
  \leq \delta,
\end{align*}
where the second inequality holds since $d^{*} + (d(0)-d^{*})q^{t} \geq \max \{d^{*}, d(0)q^t \}$.
\QED

\section{Proof of Theorem~\ref{thm:input_noise}}\label{app4}
Before proving the theorem, we first provide the following lemma.
\begin{seclem}
\label{lem:dp}
    For any $\varepsilon \geq 0$, $\delta \in(0,1)$, and $\zeta > 0$, the Gaussian mechanism $M(x)=F x+W$ with $W \sim \mathcal{N}\left(0, \mathbf{I}_n\right)$ is $(\varepsilon, \delta)$-differentially private for $\operatorname{Adj}_1^\zeta$ if
\begin{align*}
1 \ge \frac{|F|_2 \zeta }{\kappa_\varepsilon^{-1}(\delta)}.
\end{align*}
\end{seclem}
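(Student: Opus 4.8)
The plan is to recognize $M(x) = Fx + W$ with $W \sim \mathcal{N}(0, I_n)$ as a Gaussian mechanism and to reduce the $(\varepsilon,\delta)$-differential privacy requirement to the tight scalar characterization encoded by $\kappa$. The first step is an $\ell_2$-sensitivity bound: for any $(x,x') \in \operatorname{Adj}_1^\zeta$ we have $|x-x'|_1 \le \zeta$, so, using $|\cdot|_2 \le |\cdot|_1$ on vectors together with the definition of the induced norm $|F|_2$,
\[
\Delta := |F(x-x')|_2 \le |F|_2\, |x-x'|_2 \le |F|_2\, |x-x'|_1 \le |F|_2\, \zeta .
\]
Thus every adjacent pair induces two output laws $\mathcal{N}(Fx, I_n)$ and $\mathcal{N}(Fx', I_n)$ whose means lie at Euclidean distance at most $|F|_2\zeta$, and the entire problem is governed by this single scalar $\Delta$.

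Next I would show that the worst-case privacy loss between these two Gaussians depends on the pair only through $\Delta$ and equals $\kappa(\varepsilon,\Delta)$. The quantity controlling $(\varepsilon,\delta)$-differential privacy is the hockey-stick divergence $\sup_{\cS}\bigl[\bP(M(x)\in\cS) - e^{\varepsilon}\bP(M(x')\in\cS)\bigr]$, whose maximizer is the likelihood-ratio super-level set $\{w : p(w) > e^{\varepsilon} q(w)\}$, where $p,q$ are the two Gaussian densities. Since $\log(p/q)$ is affine in $w$ and depends on $w$ only through its projection onto the line spanned by $Fx-Fx'$, I would rotate coordinates so that this direction becomes the first axis and translate so that $p$ is centred at the origin; the orthogonal coordinates contribute identical factors that cancel, collapsing the computation to the scalar comparison of $\mathcal{N}(0,1)$ against $\mathcal{N}(\Delta,1)$. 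A direct calculation then gives $p(w) > e^{\varepsilon}q(w) \iff w < \Delta/2 - \varepsilon/\Delta$, and integrating the two Gaussian masses over this half-line yields exactly $\alpha(\Delta/2 - \varepsilon/\Delta) - e^{\varepsilon}\alpha(-\Delta/2 - \varepsilon/\Delta) = \kappa(\varepsilon,\Delta)$.

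Finally I would invoke monotonicity of $\kappa(\varepsilon,\cdot)$ to close the argument. One checks that $\kappa(\varepsilon,\Delta)$ increases continuously from $0$ to $1$ as $\Delta$ ranges over $(0,\infty)$, so $\kappa_\varepsilon^{-1}$ is well-defined, increasing, and strictly positive on $(0,1)$; in particular $\kappa_\varepsilon^{-1}(\delta)>0$. The hypothesis $1 \ge |F|_2\zeta / \kappa_\varepsilon^{-1}(\delta)$ rearranges to $|F|_2\zeta \le \kappa_\varepsilon^{-1}(\delta)$, whence for every adjacent pair $\Delta \le |F|_2\zeta \le \kappa_\varepsilon^{-1}(\delta)$, and monotonicity gives $\kappa(\varepsilon,\Delta) \le \delta$. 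Because $\operatorname{Adj}_1^\zeta$ is symmetric in $(x,x')$, this bounds the hockey-stick divergence in both directions by $\delta$, which is precisely the defining inequality of $(\varepsilon,\delta)$-differential privacy. The main obstacle I anticipate is making the dimension-reduction step fully rigorous — justifying that the optimal rejection set is a half-space determined solely by $\Delta$, so that the $n$-dimensional divergence truly equals the scalar $\kappa(\varepsilon,\Delta)$ — together with a clean verification that $\kappa(\varepsilon,\cdot)$ is strictly monotone, which is what makes $\kappa_\varepsilon^{-1}$ unambiguous and lets the scalar inequality be transferred from the maximal sensitivity to every admissible $\Delta$.
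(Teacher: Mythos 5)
Your proposal is correct, and it is essentially the argument the paper intends: the paper omits its own proof of this lemma, deferring to \cite[Lemma 1]{wang2023differential}, and your reconstruction is exactly the standard proof of that result --- bound the $\ell_2$ sensitivity by $|F|_2\zeta$ using $|\cdot|_2 \le |\cdot|_1$, reduce the hockey-stick divergence of the two Gaussian output laws to the scalar comparison of $\mathcal{N}(0,1)$ against $\mathcal{N}(\Delta,1)$ via the half-space likelihood-ratio set, identify it with $\kappa(\varepsilon,\Delta)$, and invoke monotonicity of $\kappa(\varepsilon,\cdot)$ to transfer $\Delta \le \kappa_\varepsilon^{-1}(\delta)$ into $(\varepsilon,\delta)$-differential privacy. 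The two details you flag as remaining obstacles are both routine: optimality of the set $\{p > e^\varepsilon q\}$ follows by integrating $p - e^\varepsilon q$ over any candidate set, and strict monotonicity follows from $\frac{d}{d\Delta}\kappa(\varepsilon,\Delta) = \frac{1}{\sqrt{2\pi}}e^{-\frac{1}{2}(\Delta/2-\varepsilon/\Delta)^2} > 0$, which uses the identity $e^{\varepsilon}\alpha'(-\Delta/2-\varepsilon/\Delta) = \alpha'(\Delta/2-\varepsilon/\Delta)$.
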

\begin{proof}
    The proof is similar to that in \cite[Lemma 1]{wang2023differential} and hence is omitted.
\end{proof}

Now, we are ready to prove Theorem~\ref{thm:input_noise}.

(Proof of Theorem~\ref{thm:input_noise})
Since $CA^{k}B = 0 $ for all $0 \leq k \leq n^{*}-2$, we have $N_k = 0$ from~\eqref{eq:Nk}, and thus the mechanism \eqref{eq:mech2} satisfies
\begin{align*}
V_k = \mathcal{Q}_v (O_kx_0 + N_k (U_k + W_k))= \mathcal{Q}_v (O_kx_0)
\end{align*}
for $k \leq n^{*} - 1$.
Thus, it follows that
\begin{align*}
&\bP \left(V_k \in \cS_1 | x_0\right) \\ 
&= \bP(({Q}_v (O_kx_0 )\in \cS_1) \leq  \bP(({Q}_v (O_kx'_0)\in \cS_1) + \delta_1 \\
& \leq \bP\left(V'_k \in \cS_1 | x'_0\right) + \delta_1, \quad \forall k \leq n^* -1, \forall \cS_1 \in \cF,
\end{align*}
where the second inequality follows from item 2) and Theorem~\ref{thm:dyn_quantizer_arbi}.
From item 1), the mechanism is $(\varepsilon, \delta)$ differentially private for any $k \leq n^{*} -1$.

Next, we consider the case where $k \geq n^{*}$. 
From the definition of $M$, $x(n^*)$ can be described as
    \begin{align*}
        x(n^*) = A^{n^*} x(0) + MU_{n^* -1} + MW_{n^*-1}.
    \end{align*}
 Since $\Delta \succ 0$, we have
    \begin{align*}
&\frac{1}{\sigma}\Delta^{-\frac{1}{2}}x(n^*) \\
&= \frac{1}{\sigma}(\Delta^{-\frac{1}{2}}A^{n^*} x(0) + \Delta^{-\frac{1}{2}}MU_{n^* -1} + \Delta^{-\frac{1}{2}}MW_{n^*-1}).
    \end{align*}
This is nothing but the transforms of the covariance of Gaussian noise to $I_{n_1}$. 
Since $U_{n^*-1}$ is public information and $|x_0 - x'_0|_1 \leq \zeta$, by Lemma~\ref{lem:dp} $(\varepsilon_0, \delta_2)$ differential privacy is achieved if 
\begin{align*}
\sigma \geq \frac{| \Delta^{-\frac{1}{2}}A^{n^*}|_2\zeta}{k_{\varepsilon_0}^{-1}(\delta_2)}.
\end{align*}
Moreover, since $x(n^{*})$ is differentially private under posterior operation \cite[Theorem 1]{Le2013} and $V_{n^{*}:k} := \begin{bmatrix}
    v(n^{*}) & v(n^{*}+ 1)  & \cdots & v(k) 
\end{bmatrix}$ is a posterior operation of $x(n^*)$, we have 
\begin{align*}
\bP \left(V_{n^{*}:k} \in \cS_2 | x_0\right) \leq e^{\varepsilon_0}\bP \left(V^{\prime}_{n^{*}:k}  \in \cS_2| x'_0 \right)+\delta_2, 
\quad \forall \cS_2 \in \cF.
\end{align*}
From the independence of $\mathcal{Q}_v$ and $W_{n^{*}-1}$, we have
\begin{align*}
&\bP \left(V_{k} \in \cS | x_0\right) \\
&= \bP \left(V_{n^*-1} \in \cS_1 | x_0 \right) \bP \left(V_{n^*:k} \in \cS_2 | x_0\right)\\
&\le  (\bP (V'_{n^*-1} \in \cS_1 | x'_0 ) + \delta_1) \bP \left(V_{n^*:k} \in \cS_2 | x_0\right)\\ 
&\le \bP \left(V'_{n^*-1} \in \cS_1 | x'_0 \right) \bP \left(V_{n^*:k} \in \cS_2 | x_0\right) + \delta_1 \\
&\le \bP \left(V'_{n^*-1} \in \cS_1 | x'_0 \right) \left(e^{\varepsilon_0}\bP \left(V^{\prime}_{n^*:k} \in \cS_2 | x'_0 \right) + \delta_2\right) + \delta_1\\
&\le e^{\varepsilon_0} \bP \left(V'_{n^*-1} \in \cS_1 | x'_0 \right) \bP \left(V^{\prime}_{n^*:k} \in \cS_2 | x'_0 \right) + \delta_2 + \delta_1\\
&\le  e^{\varepsilon}\bP \left(V'_k  \in \cS| x'_0 \right)+\delta,
\end{align*}
where the last inequality follows from item 1). 
\QED

\bibliographystyle{ieeetr}
\bibliography{ref}

\end{document}